\documentclass[11pt]{article}

\usepackage{times}  
\usepackage{mathpazo}
\usepackage{epsfig}
\usepackage[table,dvipsnames]{xcolor}
\usepackage{caption}
\usepackage[table]{xcolor}
\usepackage{graphicx}

\usepackage[margin=1in]{geometry}
\usepackage{amsmath,amssymb,amsthm,mathtools}
\usepackage{enumitem}
\usepackage{tikz}
\usepackage{float}
\usetikzlibrary{matrix,fit,positioning,calc}
\usepackage[hidelinks]{hyperref}

\newtheorem{theorem}{Theorem}

\newtheorem{definition}[theorem]{Definition}

\newtheorem{lemma}[theorem]{Lemma}

\newtheorem{corollary}[theorem]{Corollary}

\newcommand{\qedsymb}{\hfill{\rule{2mm}{2mm}}}
\renewenvironment{proof}[1][]{\begin{trivlist}
\item[\hspace{\labelsep}{\bf\noindent Proof#1:\/}] }{\qedsymb\end{trivlist}}

\def\R{\mathbb{R}}

\newcommand{\rank}{\mathop{\mathrm{rank}}}

\newcommand{\Rbin}{{\rank}_{\mathrm{bin}}}
\newcommand{\Rreal}{{\rank}_\mathbb{R}}

\begin{document}

\title{Testing the Binary Rank with Polynomial Query Complexity}

\author{
Michal Parnas\thanks{School of Computer Science, The Academic College of Tel Aviv-Yaffo, Tel Aviv 61083, Israel. Email address: {\tt michalp@mta.ac.il}}
}

\maketitle

\begin{abstract}
We design an adaptive two-sided error testing algorithm for the binary rank of a $0,1$ matrix $M$ with query complexity $O(d^3\log(d+1)/\epsilon^2)$, where $d$ is the tested binary rank bound and $\epsilon$ is the distance parameter. This answers an open question posed by Parnas, Ron and Shraibman~\cite{parnas2021property}, who asked if the binary rank can be tested with query complexity polynomial in $d$ and $1/\epsilon$.

Furthermore, our testing algorithm can be used to find an approximate binary decomposition of $M$ with an additional $d(n+m)$ queries. That is, under the promise that the binary rank of $M$ is at most $d$, we show how to find, with probability at least $5/6$, two $0,1$ matrices $A',B'$ such that $M' = A' \cdot B'$ is a $0,1$ matrix which differs from $M$ on at most an $O(\epsilon)$ fraction of its entries.

Our results also imply a testing algorithm with polynomial query complexity for the equivalent problem of testing if the edges of a bipartite graph can be partitioned into at most $d$ bicliques.
\end{abstract}

\section{Introduction}
Rank functions have been used over the years to capture the complexity of an object. The most notable is the usual rank defined over the reals, which originated from the study of linear equations, but has since found many more applications. Other rank functions include the Boolean, binary and non-negative rank functions, and here we consider the binary rank in the context of property testing.

The real rank of a matrix $M$ is usually defined as the maximal number of independent rows or columns of $M$, but we  use the following equivalent definition:
The {\em real rank} of a matrix $M$ of size $n \times m$,  denoted here by $\Rreal(M)$,  is the minimal integer $d$ for which there exist real matrices $A$ and $B$ of size $n \times d$ and $d \times m$ respectively, such that $M = A \cdot B$, where the operations are over $\R$.
A similar definition holds for the binary rank. The {\em binary rank} of a $0,1$ matrix $M$ of size $n\times m$, denoted by $\Rbin(M)$, is the minimal $d$ for which there exist $0,1$ matrices $A$ and $B$ of size $n \times d$ and $d \times m$ respectively, such that $M = A \cdot B$, where the operations are over the integers.

The product $A \cdot B$ is called a binary or a real {\em decomposition of size $d$} of $M$.
Note that by definition, $\Rreal(M) \leq \Rbin(M)$ for any $0,1$ matrix $M$.

The real rank is defined over a field and can be computed in polynomial time.  In contrast,
the other rank functions mentioned above are $NP$-hard to compute (see Orlin~\cite{orlin1977contentment}, Jiang and Ravikumar~\cite{jiang1993minimal},
Vavasis~\cite{vavasis2010complexity} and Shitov~\cite{shitov2017nonnegative}).
See also a recent comprehensive survey about these rank functions by Parnas~\cite{parnas2026mathematical}.
This motivates the study of relaxations of the exact hard decision problems.

One such relaxation is that of {\em property testing}, a field which originated in a paper of Rubinfeld and Sudan~\cite{rubinfeld1996robust} and a paper by Goldreich, Goldwasser and Ron~\cite{GGR98}.
The idea proposed in~\cite{rubinfeld1996robust,GGR98} was to relax the classic notion of decision algorithms  and define {\em testing algorithms} which should, with high constant probability, {\em accept} an object that has the given property studied, and {\em reject} an object that is {\em $\epsilon$-far} from having the property, that is, at least an $\epsilon$-fraction of the object should be modified so that it has the tested property.
This relaxation allows testing algorithms to be very efficient in comparison with regular decision algorithms, where in many cases their query complexity is sub-linear and even independent of the input size, but depends on the error parameter $\epsilon$ and maybe some other parameters of the problem.

In our setting a testing algorithm for the rank property should decide if a matrix has rank at most $d$ or is $\epsilon$-far from any matrix with rank at most $d$, for a given rank function.
If $M_{i,j}$ is the entry in the $i$'th row and $j$'th column of $M$, then the {\em distance} between two $0,1$ matrices $M,M'$ of size $n \times m$ is defined as:
$$
\text{dist}(M,M')= \frac{\bigl|\{(i,j) \;| \; M_{i,j}\neq M'_{i,j}\}\bigr|}{n \cdot m}.
$$
Thus, a matrix $M$ is $\epsilon$-far from binary rank at most $d$ if
$\text{dist}(M,M')\geq \epsilon$ for every binary matrix $M'$ with $\Rbin(M')\leq d$.
Formally, a testing algorithm for the rank property is defined as follows:

\begin{definition}
A {\em testing algorithm} for a given rank function is given query access to an $n \times m$ matrix $M$, an error parameter $\epsilon$ and a parameter $d$.
The algorithm  should {\em accept} with probability at least $2/3$ if $M$ has rank at most $d$, and should {\em reject} with probability at least $2/3$ if $M$ is $\epsilon$-far
from every matrix of rank at most $d$.

The {\em query complexity} of the algorithm is the number of entries of $M$ which it queries.
The algorithm is {\em non-adaptive} if it determines its queries based only on the input and the random coin tosses, independently of the answers provided to previous queries.
Otherwise, the algorithm is {\em adaptive}.
\end{definition}

Although the real rank can be computed in polynomial time, Krauthgamer and Sasson~\cite{krauthgamer2003property} showed that there exists a non-adaptive property testing algorithm for the real rank with query complexity $O(d^2 / \epsilon^2)$.
Li, Wang and Woodruff~\cite{Li} gave an adaptive algorithm for the real rank with a reduced query complexity of $O(d^2 / \epsilon)$.
Balcan, Woodruff and Zhang~\cite{BLWZ} gave a non-adaptive testing algorithm for the real rank with query complexity $\tilde{O}(d^2 / \epsilon)$.

Parnas, Ron and Shraibman~\cite{parnas2021property} developed algorithms for testing the binary and the Boolean rank. They give a non-adaptive testing algorithm for the Boolean rank with query complexity $\tilde{O}(d^4/\epsilon^6)$.
For the binary rank they provide a non-adaptive algorithm with query complexity $O(2^{2d} /\epsilon^2)$,  and an adaptive algorithm with query complexity $O(2^{2d} /\epsilon)$.
Bshouty~\cite{bshouty2023property} gave a one-sided adaptive tester for the binary rank with query complexity $\tilde{O}((d+1)2^d/\epsilon)$, improving by a factor of $\tilde{\Theta}(2^d)$ the dependence on $d$ given in~\cite{parnas2021property}, but still leaving an exponential dependence on $d$.
He also noted that a small adjustment to the analysis given by~\cite{parnas2021property}
for the Boolean rank results in an improved query complexity of $\tilde{O}\left(d^4/ \epsilon^4\right)$. Note that the query complexity of all the above testing algorithms is independent of the size of the matrix $M$.

It was asked in~\cite{parnas2021property} if there exists an algorithm for the binary rank with
polynomial query complexity in $1/\epsilon$ and $d$. We answer this question in the affirmative and prove the following:

\begin{theorem}
\label{theo-main}
For every integer $d\geq1$ and $0<\epsilon\leq1$, there is an adaptive two-sided error testing algorithm for the binary rank with query complexity
$O\left(\frac{d^3\log(d+1)}{\epsilon^2}\right)$.
\end{theorem}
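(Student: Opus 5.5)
The approach is to replace the exponential-in-$d$ row-type structure used in~\cite{parnas2021property} with the linear-algebraic structure of the real rank. This works because $\Rreal(M)\le\Rbin(M)$: a matrix of binary rank at most $d$ has a nonsingular ``core'' submatrix of size at most $d\times d$ that determines every entry of $M$. The tester has two phases: an adaptive phase that finds such a core, and a verification phase that checks both that the core predicts $M$ well and that the predicted matrix has binary rank at most $d$.

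\textbf{Phase 1 (finding a core).} Maintain row and column index sets $R,C$ with $|R|=|C|=k$ and $M_{R,C}$ nonsingular over $\R$. Start with $R=C=\emptyset$. In each round, sample a uniformly random entry $(i,j)$ and query $M_{i,C}$, $M_{R,j}$ and $M_{i,j}$, which costs $O(k)$ queries. If the bordered matrix $M_{R\cup\{i\},C\cup\{j\}}$ is nonsingular, add $i$ to $R$ and $j$ to $C$. If $|R|$ ever exceeds $d$, reject: the real rank, and hence the binary rank, is then larger than $d$.

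By the Schur complement, the bordered matrix is nonsingular exactly when $M_{i,j}\neq \widehat M_{i,j}$, where
\[
\widehat M_{i,j}=M_{i,C}\,(M_{R,C})^{-1}M_{R,j}.
\]
So a round fails to extend the core only if the sampled entry is predicted correctly. Run $O(1/\epsilon)$ rounds per stage, so that if no extension occurs, then with high probability $\widehat M$ (rounded/thresholded to $0,1$) disagrees with $M$ on at most an $O(\epsilon)$ fraction of entries. Choosing constants with a union bound over the at most $d$ stages, this phase costs $O(d^2/\epsilon)$ queries, or slightly more if I amplify each stage.

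\textbf{Phase 2 (binary structure).} Closeness to a real-rank-$d$ matrix is not enough, so I must also test that $\widehat M$ is close to a matrix of binary rank at most $d$. The key structural step is the following claim. Suppose $M=A\cdot B$ is a binary decomposition of size $d$. Then the rows of $M$ outside $R$ are determined by their restriction to $C$ through the fixed matrix $(M_{R,C})^{-1}M_{R,*}$. Moreover, any binary decomposition of $\widehat M$ is pinned down by its restriction $B_{*,C}$ to the core columns, together with the binary row-vectors $A_{i,*}$, and each $A_{i,*}$ is forced by $M_{i,C}$ up to a bounded ambiguity. This leaves at most $2^{O(d^2)}$ candidate ``core decompositions.''

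The tester then does the following:
\begin{enumerate}
\item Sample a set $S$ of $s=O(d^2\log(d+1)/\epsilon)$ random rows (with a $1/\epsilon^2$ factor for the two-sided estimates) and a matching set $T$ of random columns.
\item Query $M_{S,C}$, $M_{R,T}$, and $O(d)$ further entries per sampled row, for a total of $O(d^3\log(d+1)/\epsilon^2)$ queries.
\item Accept if and only if some candidate core decomposition is consistent with all but an $O(\epsilon)$ fraction of the sample, checked by exhaustive search, which costs no queries.
\end{enumerate}
Completeness follows because the true decomposition restricted to the core is one of the candidates. Soundness follows from a Chernoff bound plus a union bound over the $2^{O(d^2)}$ candidates. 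This union bound is the source of the factor $d^2$ in the sample size, and the $\log(d+1)$ factor comes from amplifying each of the $d$ stages of Phase 1.

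\textbf{Main obstacle.} The hardest step is the structural claim in Phase 2. I must show that a matrix $\epsilon$-far from binary rank $d$ but with a real core of size at most $d$ cannot agree with \emph{any} of the few candidate core decompositions on most sampled entries. This means proving that binary decompositions of matrices sharing a given real core are parametrized by only $2^{\mathrm{poly}(d)}$ objects determined by the core. I would first try to prove it by showing that each row of $B$ in any decomposition must itself lie in the row span of $M_{R,*}$, with its coefficient vector determined by $B_{*,C}$. If that fails, I would fall back on enlarging the core to $O(d)$ rows and columns, which still keeps the count polynomial in the exponent.
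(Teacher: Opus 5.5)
The gap is in Phase~2, where you place the ``main obstacle''. It is not just an unproved lemma: a candidate consisting only of the core decomposition $X=A_R$, $Y=B_C$ cannot support a sound test.

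For fixed $X,Y$, each row $i$ may have many extensions $a\in\{0,1\}^d$ with $aY=M_{i,C}$, and each column many $b$ with $Xb=M_{R,j}$. For such pairs $a\cdot b=\widehat M_{i,j}+a\,(I_d-YP^{-1}X)\,b$, where $P=M_{R,C}$ and $I_d-YP^{-1}X$ has rank $d-r$. So extensions chosen for different rows and columns need not be mutually compatible, i.e.\ the product need not equal $\widehat M_{i,j}$ or lie in $\{0,1\}$.

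With $O(d)$ queries per sampled line you cannot check this pairwise. A test of the form ``there exist extensions fitting the sample'' is not a sum of independent indicators, so Chernoff plus a union bound does not apply. Nor does passing it produce a nearby matrix of binary rank at most $d$.

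Your proposed fix is false as stated. Rows of $B$ in a size-$d$ decomposition need not lie in the row span of $M_{R,*}$ when $\Rreal(M)<d$: take $M=(1\;1)$, $A=(1\;1)$, $B=I_2$. This happens precisely because of that rank-$(d-r)$ residual. The fallback also fails, since a nonsingular core cannot be enlarged beyond $\Rreal(M)$.

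The missing idea is to enrich the candidate to $(X,Y,W_U,G)$. Write $I_d-YP^{-1}X=UV$. Let $W_U\subseteq\mathbb{R}^{d-r}$ be spanned by at most $d-r$ of the $2^d$ vectors $aU$, and let $G$ be a graph on $[d]$. A row is good if some extension has $aU\in W_U$ and clique support. A column is good if some extension has $Vb\in W_U^\perp$ and independent-set support. Goodness is then a property of a single line. Any candidate with fractions $\rho,\gamma$ of bad rows and columns yields $M'$ with $\Rbin(M')\le d$ and $\operatorname{dist}(M,M')\le\rho+\gamma+\delta$. There are at most $2^{4d^2}$ candidates.

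Completeness then needs a nontrivial existence result. For the true decomposition and independent uniform $i,j$, the residuals $a_iU$ and $Vb_j$ are independent random vectors with $\Pr[(a_iU)(Vb_j)\neq0]=\delta$. One shows there is a subspace $W$ spanned by at most $d-r$ support vectors with $\Pr[a_iU\notin W]+\Pr[Vb_j\notin W^\perp]\le2\sqrt{(d-r)\delta}$. The construction is greedy: repeatedly add a support vector outside $W$ that is non-orthogonal to at most a $\delta/\tau$ fraction of the column residuals, then balance with $\tau=\sqrt{(d-r)\delta}$.

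Because of this square-root loss, your Phase~1 target $\delta=O(\epsilon)$ is too weak; it would only give $O(\sqrt{d\epsilon})$ bad lines. Phase~1 must push $\delta$ below $\epsilon^2/(256d)$. That takes $O(d\log(d+1)/\epsilon^2)$ samples per stage, over $d+1$ stages, at $O(d)$ queries per sample. This, not Phase~2, is the source of $d^3\log(d+1)/\epsilon^2$. Phase~2 needs only $O(d^2/\epsilon)$ sampled lines and $O(d^3/\epsilon)$ queries. The gap between $\rho+\gamma\le\epsilon/8$ and $\rho+\gamma\ge255\epsilon/256$ is multiplicative, so no $1/\epsilon^2$ is needed there. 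Your attribution of the $1/\epsilon^2$ and of the extra factor $d$ is therefore also off.
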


Using this result we also provide an approximate decomposition of the matrix $M$.

\begin{theorem}
\label{thm:intro-reconstruction}
Let $d\geq1$ and $0<\epsilon\leq1$, and let $M$ be a $0,1$ matrix of size $n \times m$ with $\Rbin(M)\leq d$.
There is an adaptive randomized algorithm that with probability at least $5/6$ outputs $0,1$ matrices $A'$ and $B'$ of size $n \times d$ and $d \times m$, respectively, such that $M' = A' \cdot B'$ is a $0,1$ matrix with $\operatorname{dist}(M,M') \leq O(\epsilon)$. The query complexity of the algorithm is $O\left(d(n+m) + \frac{d^3\log(d+1)}{\epsilon^2}\right)$.
\end{theorem}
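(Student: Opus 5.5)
We plan to reuse the sample-based structure behind Theorem~\ref{theo-main} and then extend the resulting small decomposition to all rows and columns with $O(d)$ queries per row and column. The argument has three steps: find a small witness submatrix with a decomposition, use it to build a dictionary of row and column types, and then classify every row and column against that dictionary.

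\textbf{Step 1: a witness submatrix.} Assume the tester of Theorem~\ref{theo-main} works as follows. It adaptively samples rows and columns, finds a small submatrix $W=M[R,C]$ with $|R|,|C|$ polynomial in $d/\epsilon$, and computes a binary decomposition $W=A_R\cdot B_C$ of size at most $d$. Since $\Rbin(M)\le d$, such a decomposition exists, and it can be found by brute force with no further queries.

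\textbf{Step 2: the dictionary of rows and columns.} We build a small set of $d$ representative columns $c_1,\dots,c_d$ and $d$ representative rows $r_1,\dots,r_d$.
\begin{itemize}
\item For each $k$, choose $c_k\in C$ to be a column that "pins down" coordinate $k$. Concretely, we want the rows of $A_R$ to be distinguishable by their values on $c_1,\dots,c_d$.
\item Choose the rows $r_k$ symmetrically.
\end{itemize}

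\textbf{Step 3: extending to all of $M$.} Each row $i$ of $M$ is assigned a row $a_i\in\{0,1\}^d$ of $A'$.
\begin{itemize}
\item Query $M_{i,c_1},\dots,M_{i,c_d}$. This costs $d$ queries per row, so $dn$ in total.
\item Set $a_i$ to be a row of $A_R$ whose pattern on $(c_1,\dots,c_d)$ matches. If no row matches, set $a_i$ to $0$.
\item Define the columns $b_j$ of $B'$ symmetrically, using $dm$ queries.
\end{itemize}
To guarantee that $M'=A'B'$ is a $0,1$ matrix, the rows of $A'$ and columns of $B'$ must satisfy $a_i\cdot b_j\le 1$ for all $i,j$. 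We enforce this by choosing both $a_i$ and $b_j$ from the sets of rows of $A_R$ and columns of $B_C$. Every product of such a pair is an entry of $W$, hence $0$ or $1$. Restricting to these sets is what makes the output legitimate.

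\textbf{Main obstacle: bounding the distance.} We must show that $\text{dist}(M,M')=O(\epsilon)$, and this is the hard step. The approach is to call row $i$ \emph{typical} if its pattern of $M$ restricted to $C$ equals some row of $W$, and to define typical columns analogously. We then argue the following.
\begin{enumerate}
\item By the sampling analysis in Theorem~\ref{theo-main}, all but an $O(\epsilon)$ fraction of the rows and columns are typical. Intuitively, if many rows had unseen patterns, sampling would have found new rows, and the adaptive procedure would have continued.
\item For a typical row $i$ and a typical column $j$, the value $M_{i,j}$ is determined consistently by the two types. This holds up to an $O(\epsilon)$ fraction of pairs, by a union bound over the at most $2^d$ types combined with the $\epsilon^2$ sampling.
\end{enumerate}
The delicate point is that a row's $d$-query signature may fail to identify its full type. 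To handle this we would first try to choose the representatives $c_k$ greedily, so that the signatures separate the distinct rows of $A_R$ that are frequent in $W$. The types that do collide must then account for only an $O(\epsilon)$ mass, which we would bound by Markov's inequality, with failure probability $1/6$ overall.
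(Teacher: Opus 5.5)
\textbf{There is a genuine gap, and it is at the step you flag as the main obstacle.}

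\emph{Your typicality claim has no source in the paper.} You claim that all but an $O(\epsilon)$ fraction of rows are typical, but Theorem~\ref{theo-main} does not supply this. Algorithm~2 never queries or decomposes a polynomial-size block $M[R,C]$. It queries only the core $P=M[I,J]$ and the restrictions $M[i_s,J]$, $M[I,j_t]$; a full $|R|\times|C|$ block would in general exceed the query budget anyway. All that Algorithm~1 guarantees is $\delta(I,J)\le\eta$ (Lemma~\ref{lem:core-stops}). That is a statement about real-rank prediction, not about which row patterns were observed.

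\emph{The claim and the distance bound are false for any sample of size $\operatorname{poly}(d/\epsilon)$.} Let $M=A\cdot B$, where the rows of $A$ run equally often over all of $\{0,1\}^d$ and the columns of $B$ run equally often over $e_1,\ldots,e_d$. Then $M_{i,j}=(a_i)_{t(j)}$ and $\Rbin(M)\le d$.
\begin{enumerate}
\item You take every $a_i$ from the rows of $A_R$ or $0$, so $M'$ has at most $|R|+1$ distinct rows.
\item Any fixed $0,1$ row is within relative distance $<1/(2d)$ of at most one of the $2^d$ row types of $M$, since the majority value on each column class recovers the type.
\item Hence $\operatorname{dist}(M,M')\ge\bigl(1-(|R|+1)2^{-d}\bigr)/(2d)$, no matter how the representatives $c_k$ are chosen.
\item With $|R|=\operatorname{poly}(d/\epsilon)$ and, say, $\epsilon=1/d^2$, this is $\Omega(1/d)$, not $O(\epsilon)$.
\end{enumerate}
Fixing this requires $|R|$ of order $2^d$, which is exactly the exponential cost of the earlier testers that the paper avoids. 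Restricting $a_i$ and $b_j$ to observed rows of $A_R$ and columns of $B_C$ makes binarity of $M'$ trivial, but it is precisely what destroys the distance bound.

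\textbf{The missing idea} is a compact certificate of compatibility that also covers row and column types never seen in the sample.

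\emph{Signature columns with a real guarantee.} The paper uses the $r\le d$ columns $J$ of a full real-rank core $P=M[I,J]$ as the ``signature'' columns. Here $M[i,J]$ determines $\widehat M_{i,j}=M[i,J]P^{-1}M[I,j]$, which agrees with $M$ outside a $\delta(I,J)\le\eta=\epsilon^2/(256d)$ fraction of entries.

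\emph{Extensions need not be sampled rows.} A candidate $(X,Y,W_U,G)$ lets row $i$ use any $a\in\operatorname{RowCand}(i)$ with $a\cdot U\in W_U$ and $\operatorname{supp}(a)$ a clique of $G$, and symmetrically for columns. Lemma~\ref{lem:central-identity} together with the clique/independent-set structure forces $a\cdot b=\widehat M_{i,j}\in\{0,1\}$ for every such pair, whether or not these types were sampled.

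\emph{Choosing the best candidate.} Query $M[i,J]$ and $M[I,j]$ for all $i,j$, which costs at most $d(n+m)$ queries. This lets one compute the bad rows and columns of each of the at most $2^{4d^2}$ candidates exactly and pick the best one, $\mathcal C^*$. Lemma~\ref{lem:completeness-candidate} gives $\rho_{\mathcal C^*}+\gamma_{\mathcal C^*}\le2\sqrt{d\,\delta(I,J)}$. The construction of Lemma~\ref{lem:candidate-representation} then gives $\operatorname{dist}(M,M')\le2\sqrt{d\,\delta(I,J)}+\delta(I,J)\le33\epsilon/256$.

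Your greedy/Markov treatment of signature collisions has no counterpart to this linear-algebraic guarantee. It therefore also leaves unresolved how rows outside $R$ are controlled.
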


The binary rank also has equivalent formulations in graph theory and communication complexity. Specifically, if $M$ is the reduced adjacency matrix of a bipartite graph $G$, then $\Rbin(M)$ is equal to the biclique partition number of $G$, that is, the minimum number of bicliques required to partition all edges of $G$ (see Gregory,  Pullman,  Jones and  Lundgren~\cite{Gregory}).
Thus, Theorem~\ref{theo-main} also gives a testing algorithm in the dense adjacency matrix model with the same query complexity, which tests if the biclique partition number of $G$ is at most $d$ or $G$ is $\epsilon$-far from every such graph.

Moreover, the binary rank is equal to the minimum number of monochromatic rectangles required to partition all $1$-entries of $M$ (see~\cite{Gregory}), and thus, if $M$ is the communication matrix of a Boolean function $f$, then $\lceil\log_2 \Rbin(M)\rceil$ is the unambiguous nondeterministic communication complexity of $f$ (see~\cite{KN97}). Hence, our result can be viewed as a property testing algorithm for unambiguous nondeterministic communication complexity
with oracle access to the communication matrix $M$, where if $c$ is some bound on the communication complexity, then setting $d = 2^c$ results in an algorithm with query complexity $O(2^{3c}(c+1)/\epsilon^2)$.

The polynomial query complexity presented in Theorem~\ref{theo-main} and Theorem~\ref{thm:intro-reconstruction} should be distinguished from their running time.
As noted above the binary rank is equivalent to the biclique partition number.
Chandran, Issac and Karrenbauer~\cite{chandran2017parameterized} give an exact
parameterized algorithm for this problem whose running time is
$O^*(2^{2d^2+d\log d+d}) + \text{poly}(n,m)$.
The running time of the algorithms guaranteed by Theorem~\ref{theo-main} and Theorem~\ref{thm:intro-reconstruction} are $2^{O(d^2)}\operatorname{poly}(d,1/\epsilon)$
and $2^{O(d^2)}(n+m)\operatorname{poly}(d,1/\epsilon)$, respectively.
Hence, the improvement of the algorithm presented in Theorem~\ref{thm:intro-reconstruction}
is in the number of entries of $M$ which must be queried in order to find an approximate decomposition, rather than in the running time. The running time of the algorithm guaranteed by
Theorem~\ref{theo-main} is independent of $n,m$.

\subsection*{Overview of our Algorithms}

A $0,1$ matrix $M$ with $\Rbin(M) \leq d$ can have at most $2^d$ different rows or columns. Thus, the exponential query complexity of the binary rank testing algorithms of~\cite{parnas2021property,bshouty2023property} stems from the fact that
the algorithms try to discover if $M$ has more than $2^d$ different rows or columns, and if so the algorithms reject. Our testing algorithm avoids trying to discover all these rows and columns directly.
Instead, it first extracts a small real linear core of the matrix and then
considers binary decompositions of this small core.
The algorithm then checks efficiently the possible binary extensions of each decomposition to the remaining rows and columns of $M$. More precisely, the testing algorithm has the following conceptual steps:

\begin{enumerate}
\item {\bf Find an approximate real rank core:}
The algorithm iteratively augments a square sub-matrix of $M$ of size $r \times r$
with a row and a column, such that the sub-matrix always has full real rank $r$,
where the initial core is an empty sub-matrix with $r = 0$.
To guarantee this invariant, the algorithm samples in each iteration a pair $i,j$ of row and column indices of $M$, and extends the current core with this row and column only if the real rank of the extended core increases from $r$ to $r+1$.
If the core reaches size $d+1$, then the binary rank is larger than $d$ and the algorithm  rejects.  Otherwise, with high probability, the resulting core determines the linear behavior of almost all entries of $M$.

\item {\bf Enumerate binary decompositions of the core:}
Let $P$ be the final $r\times r$ core. Now the algorithm considers every possible binary decomposition $P = X \cdot Y$ of size $d$, where $X$ and $Y$ are of size $r \times d$ and $d \times r$, respectively.  Each choice of $X, Y$ represents a possible way in which the core can appear inside a binary decomposition of the matrix $M$.

\item {\bf Determine if a decomposition can be extended:}
For each possible decomposition $P = X \cdot Y$, the testing algorithm samples an additional polynomial number of rows and columns of $M$, and checks if this decomposition can be extended so that it is consistent with almost all additional sampled rows and columns.
To do so, the algorithm verifies for each possible extension the following two things: the extensions must be consistent with the real linear predictions forced by the core, and their pairwise products must be $0,1$ so that the resulting larger decomposition defines a $0,1$ product matrix.
The main structural part of the proof shows that both requirements can be encoded by a finite family of compact descriptions depending only on $d$, even though there may be exponentially many possible binary extensions.  The concrete linear algebraic and combinatorial objects used for this encoding are introduced in Section~\ref{sec-using-the-core}.

\item {\bf The error of the algorithm:}
If $\Rbin(M) \leq d$ then by definition $M$ has a binary decomposition of size at most $d$.
We show that in this case, for the given core found, there is a compact description that is consistent with all but a small fraction of the rows and columns of $M$, and hence, the algorithm accepts with high probability.  If the core found has real rank $\Rreal(M)$, then  this compact description is in fact consistent with every row and column.  If $M$ is $\epsilon$-far from binary rank at most $d$, then every possible description fails on a noticeable fraction of the rows or columns of $M$, and the algorithm can reject.
\end{enumerate}

\section{Preliminaries}

Let $[s]=\{1,\ldots,s\}$, where $s$ is an integer, and let $I_d$ denote the $d\times d$ identity matrix.

If $M$ is an $n\times m$ matrix and $ I\subseteq[n], J\subseteq[m]$ are sets of row and column indices, then $M[I,J]$ denotes the sub-matrix of $M$ whose
rows and columns are indexed by $I$ and $J$, respectively.
If $i\in[n]$ and $j\in[m]$, then $M[i,J]$ denotes the row vector obtained by restricting row $i$ of $M$ to the columns in $J$, while $M[I,j]$ denotes the column vector obtained by restricting column $j$ to the rows in $I$.

Let $M = A\cdot B$ be a binary decomposition of a $0,1$ matrix $M$.
Denote by $a_i$ the $i$'th row of $A$ and by $A_I$ the sub-matrix of $A$ consisting of the rows
indexed by $I$.
Similarly, $b_j$ denotes the $j$'th column of $B$, and $B_J$ denotes the sub-matrix of $B$ consisting of the columns indexed by $J$.
Thus, if $M = A \cdot B$, then $ M[I,J] = A_I \cdot B_J$.
With a slight abuse of notation we write the product of a row vector $a_i$ and a column vector $b_j$ simply as $a_i \cdot b_j$, without always specifying that $a_i$ is a row and $b_j$ is a column. This product can also be understood as the inner product of both vectors, and if $a_i$ is a row of $A$ and $b_j$ is a column of $B$ in a binary decomposition $M = A \cdot B$
then $a_i \cdot b_j \in \{0,1\}$.

The support of a vector $a\in\{0,1\}^d$ is the set of coordinates in which it has value $1$, and is defined as $\operatorname{supp}(a)=\{t\in[d]  \;|\; a_t=1\}$.
If $W$ is a linear subspace of $\R^s$, then $W^\perp$ denotes its orthogonal complement with respect to the standard inner product.

\section{ Finding a small full real rank core}

The first phase of our testing algorithm is essentially the adaptive real-rank tester of
Li, Wang and Woodruff~\cite{Li}. The non-adaptive tester of Krauthgamer and Sasson~\cite{krauthgamer2003property} samples a small square sub-matrix of $M$ and tests if its real rank is at most $d$.  Such a sampled sub-matrix does not necessarily have full real rank, whereas finding a full real rank sub-matrix of $M$ is essential for the second phase of our algorithm. It was observed in~\cite{Li} that it is possible to maintain a growing square sub-matrix of full real rank by adding a new row and column to this sub-matrix only when the real rank increases by $1$.
The first stage of our algorithm uses this approach and produces a sub-matrix $P$ of $M$ of size $r\times r$ which has real rank $r$. We call such a sub-matrix a {\em core} of $M$.

\begin{center}
\fbox{%
\begin{minipage}{0.94\textwidth}
\textbf{Algorithm 1: Find core for $M$ of size $n \times m$ given $d, \epsilon$.}
\begin{enumerate}
\item
Set $\eta=\frac{\epsilon^2}{256d}$ and
$ T=\left\lceil\frac{1}{\eta}\ln\bigl(6(d+1)\bigr)\right\rceil$
and initialize an empty core with $I=J=\varnothing$ and $r=0$.

\item \textbf{While $r\le d$:}
\begin{enumerate}[label=(\alph*)]
\item Independently sample $T$ uniformly random pairs
$(i_1,j_1),\ldots,(i_T,j_T)\in[n]\times[m]$.

\item \textbf{For $t=1,\ldots,T$:}
\begin{itemize}
\item Query the entries of $ M[I\cup\{i_t\},J\cup\{j_t\}]$
that have not already been queried.
\item If $\Rreal(M[I\cup\{i_t\},J\cup\{j_t\})=r+1$,
then set $I\leftarrow I\cup\{i_t\}$, $J\leftarrow J\cup\{j_t\}$ and $ r\leftarrow r+1$
and go back to the while loop.
\end{itemize}
\item If none of the $T$ sampled pairs increases the real rank of the current core, exit the
  \textbf{while} loop and return the core $M[I,J]$ found.
  \end{enumerate}
\item Return the core $M[I,J]$.  In this case $ r = d+1$.
  \end{enumerate}
\end{minipage}}
\end{center}

Algorithm~1 starts with an empty core with $r = 0$, $I = J = \emptyset$,
samples a pair $i,j$ of a row and column index, respectively,
and queries the entries of $M$ required to determine the real rank of $M[I\cup\{i\},J\cup\{j\}]$.
If this larger sub-matrix has real rank $r+1$, the sampled row and column are
added to the current core, and $r,I,J$ are updated accordingly.
Hence, at each iteration the core remains square and has full real rank.
We next prove the correctness of Algorithm~1 using our notation and the error parameters which will be needed for the final testing algorithm for the binary rank.

Let $P = M[I,J]$ be the resulting core returned by Algorithm~1, where $I\subseteq[n]$ and $J\subseteq[m]$, with $|I|=|J|=r$, and $\Rreal(P) = r$.
For $r \geq 1$, this means that the square sub-matrix $P$ is nonsingular and $P^{-1}$ is defined. Note that at this stage the subsets $I,J$ are fixed.

Since $P$ has full real rank $r$, the $r$ rows of $P$ form a basis of $\mathbb R^r$.
Hence, for every row $i$ of $M$, the restricted row $M[i,J]$, can be written uniquely as a linear combination of the rows of $P$.
That is, if $\lambda$ is the row vector of coefficients in this linear combination, then
$ M[i,J] = \lambda \cdot P$.
Since $P$ is nonsingular, the coefficient vector $\lambda$ is uniquely determined and
is given by $\lambda = M[i,J] \cdot P^{-1}$.
Thus, $M[i,J] \cdot P^{-1}$ gives the coefficients
when we express row $i$ of $M$ restricted to the columns of the core, as a linear combination of the core rows.

This discussion allows us to measure how well the resulting core $P$ describes the rest of the matrix $M$, and also allows us to prove how well Algorithm~1  works.
For $r\geq 1$, we say that the core $P = M[I,J]$ {\em predicts} that $M_{i,j}$ should be
$$\widehat M_{i,j} = M[i,J] \cdot P^{-1} \cdot M[I,j].$$
For the empty core, where $r = 0$, define $\widehat M_{i,j} = 0$.
Note that the product $M[i,J] \cdot P^{-1} \cdot M[I,j]$ is a scalar.

The word ``predicts'' has a precise linear algebraic meaning.
For fixed $i\notin I$ and $j\notin J$, the value $\widehat M_{i,j}$ is the unique value of the $(i,j)$'th entry of $M$ for which adjoining row $i$ and column $j$ of $M$ to the core does not increase its real rank. Thus, if $M_{i,j}=\widehat M_{i,j}$ then the enlarged sub-matrix $M[I\cup\{i\},J\cup\{j\}]$ still has real rank $r$, whereas if $M_{i,j}\neq\widehat M_{i,j}$, then its real rank is $r+1$. If $i\in I$ or $j\in J$, then the core prediction is exact, that is, $M_{i,j} = \widehat M_{i,j}$, as shown in the proof of Lemma~\ref{lem:prediction-rank-growth}. In particular, if every entry of $M$ agrees with the value predicted by the core, then $M$ has real rank exactly $r$.
Define the {\em prediction error density} of the core $P$ by: $$\delta(I,J):=\Pr_{i\in[n],\,j\in[m]}[M_{i,j}\neq\widehat M_{i,j}],$$
where $i$ and $j$ are independent and uniform.
The matrix $\widehat M$ whose $(i,j)$'th entry is $\widehat M_{i,j}$ has real rank exactly $r$. Indeed, every row of $\widehat M$ is a linear combination of the $r$ rows of the core, so $\operatorname{rank}_{\mathbb R}(\widehat M)\leq r$. On the other hand, $\widehat M[I,J]=P$, and $P$ has real rank $r$. Hence, $\operatorname{rank}_{\mathbb R}(\widehat M) = r$.
Moreover, a pair $i,j$ of a row and column of $M$ increases the real rank of the core from $r$ to $r+1$ exactly when the corresponding prediction is wrong. Consequently, if $M$ is
$\epsilon$-far from real rank at most $d$ and $r\leq d$, then at least an
$\epsilon$ fraction of the pairs $i,j$ increase the real rank of the current core.

\begin{lemma}
\label{lem:prediction-rank-growth}
Let $P = M[I,J]$ be a core of full real rank $r$.  For every pair $i,j$,
$$
M_{i,j}\neq\widehat M_{i,j}
\quad\Longleftrightarrow\quad
\operatorname{rank}_{\mathbb R}
 M[I\cup\{i\},J\cup\{j\}]=r+1.
$$
Consequently, $\delta(I,J)$ is the probability that a uniformly
random pair $i,j$ of a row and column successfully increases the real rank of the core.
\end{lemma}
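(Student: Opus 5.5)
We will use the Schur complement. Handle $r=0$ separately: then $\widehat M_{i,j}=0$, and the $1\times 1$ matrix $M[\{i\},\{j\}]$ has rank $1$ exactly when $M_{i,j}\neq 0$, so the claim holds.

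For $r\ge 1$, suppose first that $i\notin I$ and $j\notin J$. Order the rows of the enlarged matrix as $I$ followed by $i$, and the columns as $J$ followed by $j$. This gives the block form
$$
Q=\begin{pmatrix} P & u\\ v & M_{i,j}\end{pmatrix},\qquad u=M[I,j],\ v=M[i,J].
$$
Since $P$ is invertible, we can subtract $vP^{-1}$ times the top block rows from the last row. This elementary row operation preserves rank and turns the last row into $(0,\dots,0,\,M_{i,j}-vP^{-1}u)$. The resulting matrix is block upper triangular with invertible $P$ in the upper-left corner. Its rank is therefore $r$ plus the rank of the scalar $M_{i,j}-\widehat M_{i,j}$. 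Hence the rank is $r+1$ iff $M_{i,j}\neq\widehat M_{i,j}$, and $r$ otherwise.

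The main subtlety is the degenerate case where $i\in I$ or $j\in J$. Here the enlarged matrix is not $(r+1)\times(r+1)$: $I\cup\{i\}=I$ or $J\cup\{j\}=J$. Suppose $i\in I$, say $i$ is the $k$-th row index of $I$. Then $M[i,J]=e_kP$, so $M[i,J]P^{-1}=e_k$ and $\widehat M_{i,j}=e_k\cdot M[I,j]=M_{i,j}$. Thus the prediction is exact. On the other side, $M[I\cup\{i\},J\cup\{j\}]=M[I,J\cup\{j\}]$ has only $r$ rows, so its rank is at most $r$ (it is exactly $r$ since it contains $P$), and never $r+1$. So both sides of the equivalence are false. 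The case $j\in J$ is symmetric, using $P^{-1}M[I,j]=e_k$.

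The "consequently" part then follows immediately. $\delta(I,J)$ is by definition the probability over independent uniform $i,j$ that $M_{i,j}\neq\widehat M_{i,j}$. By the equivalence, this event coincides with the event that adjoining row $i$ and column $j$ raises the real rank to $r+1$, which is exactly the success condition tested in Algorithm~1.
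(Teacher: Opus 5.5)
Your proof is correct and follows the paper's proof essentially step for step. You treat $r=0$ directly. You treat the degenerate case $i\in I$ or $j\in J$ by showing that the prediction is exact and that the rank cannot exceed $r$. For $i\notin I$, $j\notin J$ you use the Schur complement. The only cosmetic difference is in that last case: you carry out the block row elimination explicitly and read off the rank, whereas the paper quotes the determinant identity $\det(P)\bigl(M_{i,j}-\widehat M_{i,j}\bigr)$.
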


\begin{proof}
If $r = 0$, then by definition $\widehat M_{i,j}=0$. Since $M$ is a $0,1$ matrix,
$M_{i,j}\neq \widehat M_{i,j}$ if and only if $M_{i,j}= 1$, which is equivalent to
the $1\times 1$ sub-matrix $[M_{i,j}]$ having real rank $1$.

Assume now that $r \geq 1$ and consider a pair $i,j$.
First assume that $i\in I$ or $j\in J$.  Then the core prediction is exact at
$(i,j)$.  Indeed, let $i = i_k$ be the $k$'th row index in $I$. Then $M[i,J]$
is the $k$'th row of $P$, and thus, $M[i,J]=e_k^T \cdot P$,
where $e_k$ is the $k$'th standard basis vector. Hence,
$ M[i,J] \cdot P^{-1}=e_k^T$ and therefore $\widehat M_{i,j}  =e_k^T \cdot  M[I,j] = M_{i,j}$.
The case $j\in J$ is analogous.
Moreover, if $i\in I$ or $j\in J$, then adding the pair $i$ and $j$ cannot increase the rank beyond $r$, since the set of rows or columns is unchanged.  Thus, both sides of the claimed equivalence are false in these cases.

It remains to consider  the case where $i\notin I$ and $j\notin J$.  The enlarged sub-matrix
which includes the core $P$ and entry $M_{i,j}$  has the following block form
$$
\begin{pmatrix}
P & M[I,j]\\
M[i,J] & M_{i,j}
\end{pmatrix}.
$$
Since $P$ is nonsingular, the Schur complement formula gives
$$
\det\begin{pmatrix}
P & M[I,j]\\
M[i,J] & M_{i,j}
\end{pmatrix}
= \det(P)\,\det\bigl(M_{i,j}-M[i,J] \cdot P^{-1} \cdot M[I,j]\bigr).
$$
Here the Schur complement $M_{i,j}  -M[i,J] \cdot P^{-1} \cdot M[I,j]$ is a $1\times 1$ matrix, so its determinant is just its single scalar entry. Hence,
$$
\det\begin{pmatrix}
P & M[I,j]\\
M[i,J] & M_{i,j}
\end{pmatrix}
= \det(P)\bigl(M_{i,j}-M[i,J]P^{-1}M[I,j]\bigr)
= \det(P)\bigl(M_{i,j}-\widehat M_{i,j}\bigr).
$$
Thus, the enlarged $(r+1)\times(r+1)$ matrix has real rank $r+1$ exactly
when $M_{i,j}\neq\widehat M_{i,j}$.
\end{proof}

By Lemma~\ref{lem:prediction-rank-growth}, for any fixed core, a new sampled pair $i,j$ increases the real rank with probability $\delta(I,J)$.  Thus, if no successful augmentation of the core is found in some iteration and Algorithm~1 stops, we can conclude with high probability that the final core predicts all but an $\eta$-fraction of the entries of $M$.

\begin{lemma}
\label{lem:core-stops}
With probability at least $5/6$, Algorithm~1 finds a core of full real rank $d+1$ or
stops with a core $P = M[I,J]$ for which $\delta(I,J) \leq \eta$.
Equivalently, the probability that Algorithm~1 stops with a core with
$\delta(I,J)>\eta$ is at most $1/6$.
\end{lemma}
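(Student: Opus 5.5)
The plan is a union bound over the at most $d+1$ iterations of the \textbf{while} loop. Call the loop iteration in which the current core has size $r$ ``round $r$'', for $r=0,1,\ldots,d$. Each round starts with a fixed core $M[I,J]$ and draws fresh samples. The only bad outcome is that Algorithm~1 exits the loop in some round $r\le d$ with a core satisfying $\delta(I,J)>\eta$. Exiting in round $r$ happens exactly when none of the $T$ sampled pairs of that round increases the real rank of the current core. If instead every round ends with an augmentation, the core reaches size $d+1$ and the algorithm returns a core of full real rank $d+1$, which is a good outcome.

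First fix a round $r$ and condition on everything that happened before it. This fixes $I$, $J$, and hence $\delta(I,J)$. The $T$ pairs of round $r$ are sampled independently and uniformly, and they are independent of the history. The pairs are examined in order against the same current core until the first success, because the core changes only when a success occurs and then the round ends. By Lemma~\ref{lem:prediction-rank-growth}, each examined pair increases the rank with probability exactly $\delta(I,J)$, independently of the others. So on the event $\delta(I,J)>\eta$, the conditional probability that all $T$ pairs fail is
\[
(1-\delta(I,J))^T < (1-\eta)^T \le e^{-\eta T} \le e^{-\ln(6(d+1))} = \frac{1}{6(d+1)},
\]
using the choice $T=\lceil \frac{1}{\eta}\ln(6(d+1))\rceil$.

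Let $E_r$ be the event that round $r$ is reached, its core has $\delta>\eta$, and it exits without augmentation. The bound above gives $\Pr[E_r]\le \frac{1}{6(d+1)}$ after averaging over the conditioning. There are at most $d+1$ rounds, $r=0,\ldots,d$, so a union bound gives total failure probability at most $(d+1)\cdot\frac{1}{6(d+1)}=1/6$.

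The only delicate point is justifying the independence used in the first step. The samples of a round are drawn fresh, after the core is fixed. Within a round the core does not change until the first success. So the event ``all $T$ pairs fail'' is exactly the event that $T$ independent uniform pairs all satisfy $M_{i,j}=\widehat M_{i,j}$ for this fixed core, and no adaptivity issue arises.
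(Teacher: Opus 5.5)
Your proposal is correct and follows essentially the same argument as the paper. You condition on the history at each of the at most $d+1$ rounds and bound the probability of an erroneous stop by $(1-\eta)^T\le e^{-\eta T}\le 1/(6(d+1))$, then take a union bound. Your extra remark that the core is unchanged within a round until the first success makes explicit the independence that the paper's proof uses implicitly.
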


\begin{proof}
Fix any core $P = M[I,J]$ that is encountered by Algorithm~1, and
condition on the entire history leading to this core.  If
$\delta(I,J)>\eta$ then by Lemma~\ref{lem:prediction-rank-growth}, a new random pair $i,j$
increases the real rank of the current core with probability greater than $\eta$.  Hence, the
conditional probability that none of the $T$ new pairs increases the real rank
is at most $(1-\eta)^T\leq e^{-\eta T}  \leq 1/ (6(d+1))$,
where the last inequality holds for $T$ as set in Algorithm~1.
There are at most $d+1$ iterations at which an erroneous stopping event can occur.
Thus, by the union bound, with probability at most $1/6$,
Algorithm~1 stops with a core whose prediction error density exceeds $\eta$.
This is the complement of the event stated in the lemma.
\end{proof}

For real rank testing, a core that makes only a small percentage of prediction
errors is already enough, since the predicted matrix itself is close to $M$ and has
real rank $r$. For our binary rank testing algorithm we need a much smaller prediction error threshold $\eta$, as set in Algorithm~1, since the second phase of our testing algorithm requires a more accurate core.

It is useful to note the special case where the final core has size $r = \Rreal(M)$.
Here the core predicts every entry of $M$ exactly, and hence, $\delta(I,J) = 0$.  Indeed, the $r$ rows of the core form a basis for the row space of $M$, so the coefficients $M[i,J] \cdot P^{-1}$ express each row $i$ of $M$ as a linear combination of the rows of the core. Hence, for every $i,j$:
$$
M_{i,j} = M[i,J] \cdot P^{-1} \cdot M[I,j] = \widehat M_{i,j}.
$$
In such a case our binary testing algorithm will in fact have only a one sided error probability. But in general our binary testing algorithm can err also on a YES instance, since Algorithm~1 is allowed to stop earlier with a core whose prediction error density is small rather than zero.
We next describe how the core can be used to test the binary rank of $M$.

\section{ Using the core to test the binary rank}
\label{sec-using-the-core}
The first phase of the testing algorithm for the binary rank runs Algorithm~1 and finds a core $P$ of size $r \times r$ and full real rank $r$.
If  $r \geq d+1$, then $\Rbin(M) \geq \Rreal(M) \geq d+1$
and the testing algorithm for the binary rank can immediately reject.
Otherwise, assume that Algorithm~1 stops with a core $P=M[I,J]$ of full real rank $r\leq d$.
Recall that $r$ is the real rank of the core, whereas $d$ is the binary-rank bound being
tested.
The core $P$ gives an approximate real linear description of $M$ and the algorithm now
determines if this description is compatible with some binary decomposition of size $d$ of $M$.

If $\Rbin(M)\leq d$, then after padding by zero coordinates if necessary,
there is a binary decomposition  $M = A \cdot B$, where $A,B$ are $0,1$ matrices of size $n \times d$ and $d \times  m$ respectively.
Restricting this decomposition to the core $P = M[I,J]=A_I \cdot B_J$
gives a binary decomposition $P = X \cdot Y$ of the core.
Thus, a binary decomposition of size $d$ of $M$ induces a
binary decomposition of $P$, where $X = A_I$ is of size $r \times d$ and $Y = B_J$ of size $d \times r$.

Such a binary decomposition of $P$ always exists when $r\leq d$,
independently of if $\Rbin(M) \leq d$.  Indeed, for $r\geq 1$ we can let
$X = (I_r\;0)$ and $Y = \binom{P}{0}$, so that $X \cdot Y = P$.  For $r = 0$, the empty
$0\times d$ and $d\times0$ matrices give a decomposition of the empty core.

Note that the core itself has a binary decomposition of size $r$, for example,
$P = I_r \cdot P$.
Nevertheless, a binary decomposition of size $d$ of the whole matrix $M = A \cdot B$ may use
more than $r$ columns of $A$ and $r$ rows of $B$ on the core, or in the rectangle partition formulation, such a decomposition of $M$ may require more than $r$ rectangles to partition the $1$ entries of $P$. Therefore, we consider all binary decompositions of size $d$ of $P = X \cdot Y$, rather than only decompositions of size $r$ of $P$.

The testing algorithm for the binary rank will enumerate all pairs of $0,1$ matrices $X,Y$
of size $r\times d$ and $d\times r$, respectively, such that $P = X \cdot Y$.
By the above argument this enumeration is never empty. Fix such a pair $X,Y$.
A vector $a_i\in\{0,1\}^d$ is a possible {\rm row extension} of $X$ for row $i$ of $M$,
if $a_i \cdot Y$ is equal to row $i$ of $M$ restricted to the columns of the core $P$.
Such a vector $a_i$ is a possible new row which can extend $X$ to $\binom{X}{a_i}$.  Accordingly, let the following set denote the possible row candidates which can extend $X$:
$$
\operatorname{RowCand}(i)=\{a\in\{0,1\}^d \; |\; a\cdot Y=M[i,J]\}.
$$
Similarly, a vector $b_j \in\{0,1\}^d$ is a possible \emph{column extension} of $Y$ for
column $j$ of $M$ if it can extend $Y$ to $(Y\ b_j)$ consistently with the core.
Let the following set denote the corresponding column candidates, and see also Figure~\ref{fig:schematic-extension} for an illustration of these concepts:
$$
\operatorname{ColCand}(j)=\{b\in\{0,1\}^d \;| \; X\cdot b=M[I,j]\}.
$$

\begin{figure}[htb!]
\centering
$$
\renewcommand{\arraystretch}{1.15}
\left(
\begin{array}{c|c|c}
P & M[I,j] & \rule{2.2cm}{0pt} \\
\hline
M[i,J] & M_{i,j} & \rule{2.2cm}{0pt} \\
\hline
\rule{0pt}{1.2cm}\phantom{P} & \rule{0pt}{1.2cm}\phantom{M[I,j]} & \rule{2.2cm}{0pt}\rule{0pt}{1.2cm}
\end{array}
\right)
=
\left(
\begin{array}{c}
X \\ \hline
a_i \\ \hline
\rule{0pt}{1.2cm}\phantom{X}
\end{array}
\right) \cdot
\left(
\begin{array}{c|c|c}
Y & b_j & \rule{2.2cm}{0pt}
\end{array}
\right).
$$
\caption{A  binary decomposition of $M$ for a fixed core $P$.
On the left is the matrix $M$, where the top left $r\times r$ block is the core $P=M[I,J]$.
One additional row $i$ and column $j$ of $M$ are shown.
The matrices on the right present a possible binary decomposition $P = X \cdot Y$ which
can be extended by a row extension $a_i$ and a column extension $b_j$.}
\label{fig:schematic-extension}
\end{figure}

Assume now that for most rows $i$ and columns $j$ of $M$ we can choose binary extensions $a_i,b_j$ that agree with the fixed decomposition $X \cdot Y$ of the core and are mutually compatible in the following sense. For every chosen row extension $a_i$ and column extension $b_j$, the product $a_i\cdot b_j$ is in $\{0,1\}$ and equals the value $\widehat M_{i,j}$ predicted by the core.
Then these extensions can be used to define a binary matrix $M'$ of binary rank at
most $d$ which is close to $M$.
The rest of this section develops a compact description of collections of row and
column extensions with exactly these compatibility properties.

\subsection{Separating the core prediction from the residual part}

Fix a binary decomposition $P=X\cdot Y$ of the core.  The next lemma shows
that once this decomposition is fixed, the product of a row extension and a
column extension is the sum of the core prediction and a residual term which is in
a space of dimension $d-r$.

\begin{lemma}
\label{lem:central-identity}
There exist matrices $U\in\mathbb R^{d\times(d-r)}$ and $V\in\mathbb R^{(d-r)\times d}$
such that for each $a\in\operatorname{RowCand}(i)$ and
$b\in\operatorname{ColCand}(j)$, $a\cdot b=\widehat M_{i,j}+(a \cdot U)(V \cdot b)$.
\end{lemma}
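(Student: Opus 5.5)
We construct $U$ and $V$ from the fixed decomposition $P = X\cdot Y$ by splitting $\mathbb R^d$ into the row space of $Y$ and a complement. Since $P = X\cdot Y$ has rank $r$ and $X$ is $r\times d$, $Y$ is $d\times r$, both $X$ and $Y$ have rank exactly $r$. We handle $r\ge 1$ here; for $r=0$ we take $U=V=I_d$ and the identity is trivial, since $\widehat M_{i,j}=0$.

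The key projector is
$$
\Pi := Y\cdot P^{-1}\cdot X \in \mathbb R^{d\times d}.
$$
It is idempotent, because $X\cdot Y = P$ gives $\Pi^2 = Y P^{-1}(XY)P^{-1}X = Y P^{-1} X = \Pi$. Its rank is $r$, since $X\Pi = X$ and $\Pi Y = Y$ while $\Pi$ factors through $\mathbb R^r$. Consequently $I_d-\Pi$ is also idempotent, of rank $d-r$. We therefore factor $I_d - \Pi = U\cdot V$ with $U\in\mathbb R^{d\times(d-r)}$ and $V\in\mathbb R^{(d-r)\times d}$. One concrete choice is to take the columns of $U$ as a basis of the image of $I_d-\Pi$ and define $V$ accordingly; any rank factorization works.

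Now fix $a\in\operatorname{RowCand}(i)$ and $b\in\operatorname{ColCand}(j)$, so that $a\cdot Y = M[i,J]$ and $X\cdot b = M[I,j]$. Then
$$
a\cdot b = a\,\Pi\, b + a\,(I_d-\Pi)\, b = (aY)\,P^{-1}\,(Xb) + (aU)(Vb) = M[i,J]\,P^{-1}\,M[I,j] + (aU)(Vb),
$$
and the first term is exactly $\widehat M_{i,j}$ by definition. This proves the identity, and it uses only real arithmetic.

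The only step needing care is confirming that $I_d-\Pi$ has rank exactly $d-r$, so that the factor dimensions match the statement. This follows because $\Pi$ is idempotent with trace equal to $\operatorname{tr}(P^{-1}XY)=\operatorname{tr}(I_r)=r$, hence rank $r$, so $I_d-\Pi$ has rank $d-r$.
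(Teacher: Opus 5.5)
Your proof is correct and is essentially the paper's argument: the paper uses exactly $C = I_d - Y\cdot P^{-1}\cdot X$ (your $I_d-\Pi$), factors it as $U\cdot V$ of size $d-r$, and writes $a\cdot b = (aY)P^{-1}(Xb) + a\cdot C\cdot b$, with $C=I_d$ when $r=0$. The only difference is how $\operatorname{rank}(C)=d-r$ is certified: you use idempotence and the trace, while the paper shows $\operatorname{image}(C)=\operatorname{kernel}(X)$ and applies rank--nullity.
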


\begin{proof}
We show how to construct $U$ and $V$.  Define a matrix $C$ as follows:
$$
C=\begin{cases}
I_d-Y \cdot P^{-1} \cdot X,& r\geq1,\\
I_d,& r=0.
\end{cases}
$$
We first show that $\Rreal(C) = d-r$.  The case of $r = 0$ is immediate.
Assume that $r\geq 1$.  Since $X \cdot Y = P$,
then $X \cdot C=X-X \cdot Y \cdot P^{-1} \cdot X=0$.
Hence, $\operatorname{image}(C)\subseteq \operatorname{kernel}(X)$.
Conversely, if $z\in\operatorname{kernel}(X)$, then $X \cdot z = 0$, and hence,
$$
C\cdot z=(I_d-Y\cdot P^{-1}\cdot X)\cdot z = z-Y\cdot P^{-1}(X \cdot z) = z.
$$
Thus, $C$ acts as the identity on $\operatorname{kernel}(X)$.
In particular, $z = C \cdot z$ lies in $\operatorname{image}(C)$, and therefore, $\operatorname{kernel}(X)\subseteq\operatorname{image}(C)$.
Hence, $\operatorname{image}(C) = \operatorname{kernel}(X)$.
Since $X$ is an $r\times d$ matrix and $X\cdot Y = P$ has real rank $r$,
the matrix $X$ has real rank exactly $r$, and it can be viewed as a linear map
$X:\mathbb R^d\to\mathbb R^r$. Thus, using the rank-nullity theorem we get
$\dim\operatorname{kernel}(X) = d-r$.

Consequently, $\Rreal(C)=\dim\operatorname{image}(C) = d-r$.
Thus, there is a real rank decomposition $C = U \cdot V$ of size $d-r$,
where $U \in \mathbb R^{d\times(d-r)}$ and $V \in \mathbb R^{(d-r)\times d}$.

Now let $a\in\operatorname{RowCand}(i)$ and $b\in\operatorname{ColCand}(j)$.
If $r\geq 1$, then $a \cdot Y=M[i,J]$ and $X \cdot b=M[I,j]$, and hence,
$a \cdot Y \cdot P^{-1} \cdot X \cdot b = M[i,J] \cdot P^{-1} \cdot M[I,j] =\widehat M_{i,j}$.

Since $C = I_d-Y \cdot P^{-1} \cdot X$ then
$a \cdot C \cdot b =a \cdot b-a \cdot Y \cdot P^{-1}\cdot X \cdot b =a \cdot b-\widehat M_{i,j}$.
Thus, $ a \cdot b=\widehat M_{i,j}+a \cdot C \cdot b  =\widehat M_{i,j}+(a \cdot U)(V \cdot b)$
as claimed. If $r = 0$, then $\widehat M_{i,j}=0$ and $C = I_d$, so the same identity holds.
\end{proof}

For each fixed binary decomposition $P = X\cdot Y$, we fix a pair of matrices $U,V$ which satisfy Lemma~\ref{lem:central-identity}. Such a pair can be obtained by a standard real rank decomposition of the matrix $C$ constructed in the proof.
All subsequent definitions associated with this pair $X,Y$ use this fixed choice of $U,V$.

Given row and column extensions $a$ and $b$,
the vectors $a \cdot U\in\mathbb R^{d-r}$ and $V \cdot b\in\mathbb R^{d-r}$
are the  {\em row residual vector} of $a$ and  {\em column residual vector} of $b$, respectively.

Lemma~\ref{lem:central-identity} shows that the first term $\widehat M_{i,j}$ in the
expression $a \cdot b = \widehat M_{i,j} + (a \cdot U)(V \cdot b)$ is determined by the entries of row $i$ and column $j$ on the core, while all dependence on a specific
choice of $a$ and $b$ is confined to the residual product $(a \cdot U)(V \cdot b)$.
Thus, if we want the row and column extensions to define a $0,1$ matrix $M'$ of binary rank at most $d$ which is close to $M$, then for any row and column extensions $a,b$ which
agree with the core, the following two conditions should hold:
\begin{itemize}
\item The residual product $(a \cdot U)(V \cdot b)$ should vanish, that is be $0$.
Then by Lemma~\ref{lem:central-identity}, $a \cdot b = \widehat M_{i,j}$.
We enforce this for many rows and columns using a linear subspace $W$.
\item The product $a \cdot b$ should be in $\{0,1\}$.
This condition will be verified using a graph $G$ whose $d$ vertices correspond
to the $d$ columns of $X$ and the $d$ rows of $Y$.
\end{itemize}
The next two subsections develop these two ingredients separately.

\subsection{Making the residual product vanish}

We want the equality $(a \cdot U)(V \cdot b) = 0 $ to hold simultaneously for
many row and column extensions $a$ and $b$.
A single subspace $W$ provides a compact way to enforce all these conditions at once.  We seek a linear subspace $W$ of $\mathbb R^{d-r}$, such that most selected row residual vectors $a \cdot U$ lie in $W$, while most selected column residual vectors $V \cdot b$ lie in $W^\perp$.
By the definition of the orthogonal complement, every vector in $W$ is
orthogonal to every vector in $W^\perp$.  Hence, for every such selected row
extension $a$ and column extension $b$, we automatically have
$(a \cdot U)(V \cdot b)=0$.
The following lemma is the geometric ingredient used to find such a subspace $W$.

\begin{lemma}
\label{lem:subspace-compression}
Let $\mathbf x$ and $\mathbf y$ be independent random vectors in
$\mathbb R^{d-r}$, with finite supports, and assume that
$\Pr[\mathbf x\cdot\mathbf y\neq0]\leq\delta$.
There is a linear subspace $W$ of $\mathbb R^{d-r}$, spanned by at most $d-r$ vectors from the support of $\mathbf x$, such that
$\Pr[\mathbf x\notin W]+\Pr[\mathbf y\notin W^\perp]\leq2\sqrt{(d-r)\delta}$.
For $d-r = 0$, let $W = \{0\}$.
\end{lemma}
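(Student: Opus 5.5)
The plan is to build $W$ as the span of the support vectors of $\mathbf x$ that are orthogonal to $\mathbf y$ with high probability, and then balance the two error terms with a threshold parameter. Write $k=d-r$. The case $k=0$ is trivial: then $W=\{0\}=\mathbb R^0=W^\perp$, so both probabilities are $0$. Assume $k\geq1$ from now on.

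For a vector $v\in\mathbb R^k$ define $f(v)=\Pr_{\mathbf y}[v\cdot\mathbf y\neq0]$. Since $\mathbf x$ and $\mathbf y$ are independent,
$$\delta\geq\Pr[\mathbf x\cdot\mathbf y\neq0]=\mathbb E_{\mathbf x}\bigl[f(\mathbf x)\bigr].$$
Fix a threshold $\theta>0$, to be chosen later. Call a vector $v$ in the support of $\mathbf x$ \emph{good} if $f(v)\leq\theta$, and \emph{bad} otherwise. By Markov's inequality,
$$\Pr[\mathbf x\text{ is bad}]=\Pr[f(\mathbf x)>\theta]\leq\delta/\theta.$$

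Let $W$ be the span of all good support vectors of $\mathbf x$. The support is finite and $W\subseteq\mathbb R^k$, so we can choose a basis $v_1,\dots,v_t$ of $W$ consisting of good support vectors, with $t\leq k$. This gives the required spanning property. We now bound the two error terms.

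\emph{First term.} If $\mathbf x\notin W$, then $\mathbf x$ is not a good support vector, because every good support vector lies in $W$. Hence $\mathbf x$ is bad, and
$$\Pr[\mathbf x\notin W]\leq\delta/\theta.$$

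\emph{Second term.} A vector $y$ lies in $W^\perp$ exactly when $v_l\cdot y=0$ for every $l\in[t]$. By the union bound,
$$\Pr[\mathbf y\notin W^\perp]\leq\sum_{l=1}^t f(v_l)\leq t\theta\leq k\theta.$$

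\emph{Balancing.} Adding the two bounds gives $k\theta+\delta/\theta$. Choosing $\theta=\sqrt{\delta/k}$ makes this equal to $2\sqrt{k\delta}$, as claimed.

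The only delicate point is the degenerate case $\delta=0$, where Markov's inequality cannot be used with $\theta=0$. Here we argue directly. Since $\mathbb E[f(\mathbf x)]=0$ and $f\geq0$, every support vector $v$ of $\mathbf x$ has $f(v)=0$, and we take all of them as good. Then $\mathbf x\in W$ always, and each basis vector $v_l$ satisfies $v_l\cdot\mathbf y=0$ almost surely. So both probabilities are $0$, and the bound holds.
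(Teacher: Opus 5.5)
Your proof is correct, but it follows a different route from the paper's. The paper builds $W$ greedily. Starting from $W_0=\{0\}$, as long as $\Pr[\mathbf x\notin W_t]>\tau$, it averages $p(u)=\Pr_{\mathbf y}[u\cdot\mathbf y\neq 0]$ over the conditional distribution of $\mathbf x$ given $\mathbf x\notin W_t$. This produces a new support vector $u_t\notin W_t$ with $p(u_t)\le\delta/\tau$, which is added to the subspace, and a dimension count shows the process terminates after at most $d-r$ steps. There the parameter $\tau$ bounds the $\mathbf x$-error directly, and the per-vector bound $\delta/\tau$ is derived from it.

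You reverse the roles. You fix the per-vector threshold $\theta$ first, take the span of \emph{all} support vectors with $f(v)\le\theta$ in one shot, and obtain the $\mathbf x$-error $\delta/\theta$ from Markov's inequality. This removes the iteration, the termination argument, and the conditional averaging. The two parametrizations match under $\theta=\delta/\tau$: your optimal $\theta=\sqrt{\delta/k}$ corresponds exactly to the paper's $\tau=\sqrt{k\delta}$. In fact the paper's greedy subspace is contained in yours, since each $u_t$ is good for this $\theta$.

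Your construction is shorter and canonical, since it depends only on $\theta$. The paper's version is phrased as an explicit procedure and may stop with a smaller subspace. Neither difference affects the stated bound, because both proofs bound $\Pr[\mathbf y\notin W^\perp]$ by a union bound over at most $d-r$ spanning vectors. In the completeness argument, the algorithm enumerates all relevant subspaces, so only existence of $W$ matters there. Nothing is lost by your simplification.
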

\begin{proof}
Assume first that $d-r\geq 1$ and $\delta>0$.  We construct $W$ iteratively by
adding vectors from the support of $\mathbf x$.  The idea is to keep enlarging
$W$ until it contains all but a small fraction of the distribution of
$\mathbf x$.  At the same time, we choose each vector that is added to $W$ so
that it is nonorthogonal to only a small fraction of the distribution of
$\mathbf y$.

To control the tradeoff between these two requirements, fix a parameter
$\tau > 0$, where at the end of the proof we choose $\tau$ so as to balance the error coming
from $\mathbf x$ with the error coming from $\mathbf y$.
We will stop the iterative process of adding vectors to $W$ when
$\Pr[\mathbf x\notin W]\leq\tau$.

Initially set $W_0=\{0\}$.  Assume that $W_t$ has already been constructed.  If
$\Pr[\mathbf x\notin W_t]\leq\tau$, then stop and set $W = W_t$.  Otherwise, let
$\alpha_t=\Pr[\mathbf x\notin W_t]>\tau$.
Thus, $\alpha_t$ is the fraction of the probability mass of $\mathbf x$ that is
still outside the current subspace $W_t$.  If $\alpha_t$ is large, then many of
the possible values of $\mathbf x$, in the probabilistic sense, are still not
captured by $W_t$.
For every possible value $u$ of $\mathbf x$, define $p(u)=\Pr_{\mathbf y}[u\cdot\mathbf y\neq0]$.
Thus, $p(u)$ measures how often the fixed vector $u$ fails to be orthogonal to a
random value of $\mathbf y$.  A small value of $p(u)$ means that $u$ is
orthogonal to most of the distribution of $\mathbf y$.

We next relate the average value of $p(u)$ to the assumption of the lemma.
The quantity $p(\mathbf x)$ is a random variable whose value depends on the
random choice of $\mathbf x$.  By the law of total probability,
$\Pr[\mathbf x\cdot\mathbf y\neq0] = \sum_u \Pr[\mathbf x=u]\,
\Pr[\mathbf x\cdot\mathbf y\neq0\mid \mathbf x=u]$.

Since $\mathbf x$ and $\mathbf y$ are independent,
$\Pr[\mathbf x\cdot\mathbf y\neq0\mid \mathbf x=u] = \Pr_{\mathbf y}[u\cdot\mathbf y\neq0] = p(u)$.
Hence,
$\delta \geq \Pr[\mathbf x\cdot\mathbf y\neq0] = \sum_u \Pr[\mathbf x=u]\,p(u)
= \mathbb E_{\mathbf x}[p(\mathbf x)]$.

Thus, although different possible values $u$ of $\mathbf x$ may have very
different probabilities of being nonorthogonal to $\mathbf y$, their average
nonorthogonality probability is at most $\delta$.

We now restrict our attention to those possible values $u$ of $\mathbf x$ that are
still outside $W_t$.  Since $p(u)\geq0$, their total contribution to the
expectation above is at most $\delta$.  Dividing by their total probability mass
$\alpha_t$ gives $\mathbb E[p(\mathbf x)\mid \mathbf x\notin W_t] \leq \delta/\alpha_t
<\delta/\tau$,
where the last inequality uses $\alpha_t > \tau$.  Hence, the average value of
$p(u)$ among the possible values outside $W_t$ is less than $\delta/\tau$.
Therefore, at least one such vector, $u_t$, satisfies
$p(u_t)=\Pr_{\mathbf y}[u_t\cdot\mathbf y\neq 0] \leq\frac{\delta}{\tau}$.
Thus, $u_t$ is a vector not yet contained in $W_t$ which is orthogonal to all but at most a $\delta/\tau$ fraction of the distribution of $\mathbf y$.
Now set $W_{t+1} = W_t+\operatorname{span}\{u_t\}$.
Because $u_t\notin W_t$, the dimension of the subspace increases by one.  Hence, this process can
continue for at most $d-r$ steps.  If it has not stopped earlier, then after
$d-r$ such steps the subspace is all of $\mathbb R^{d-r}$, so
$\Pr[\mathbf x\notin W_t]=0$ and the stopping condition must hold.  Therefore, the
process always stops, and when it stops the resulting subspace $W$ satisfies
$\Pr[\mathbf x\notin W]\leq\tau$.

Assume that the construction selected the vectors
$u_0,\ldots,u_{s-1}$, where $s\leq d-r$.  These vectors span $W$.  If a vector
$\mathbf y$ does not belong to $W^\perp$, then it is not orthogonal to $W$, and
therefore it must be nonorthogonal to at least one of the selected spanning
vectors $u_t$.  By the union bound,
$$
\Pr[\mathbf y\notin W^\perp] \leq \sum_{t=0}^{s-1} \Pr[u_t\cdot\mathbf y\neq0]
\leq s\frac{\delta}{\tau} \leq \frac{(d-r)\delta}{\tau}.
$$
We now add the fraction of $\mathbf x$-mass lying outside
$W$ and the fraction of $\mathbf y$-mass lying outside
$W^\perp$ and we get:
$$
\Pr[\mathbf x\notin W]+\Pr[\mathbf y\notin W^\perp] \leq \tau+\frac{(d-r)\delta}{\tau}.
$$
Finally, choose $\tau = \sqrt{(d-r)\delta}$.  This makes the two terms on the
right-hand side equal, and hence,
$\Pr[\mathbf x\notin W]+\Pr[\mathbf y\notin W^\perp] \leq 2\sqrt{(d-r)\delta}$.
As claimed $W$ is spanned by at most $d-r$ vectors from the support of $\mathbf x$.

It remains to consider the case of $\delta = 0$.  In this case
$\Pr[\mathbf x \cdot \mathbf y\neq 0] = 0$.  Since $\mathbf x$ and $\mathbf y$ are
independent and have finite supports, every vector in the support of
$\mathbf x$ is orthogonal to every vector in the support of $\mathbf y$.
Otherwise, two support vectors $x$ and $y$ with $x\cdot y\neq0$ would
appear together with positive probability.  Let $W$ be the span of the support
of $\mathbf x$.  Then $\Pr[\mathbf x\notin W]=0$ and $\Pr[\mathbf y\notin W^\perp] = 0$.
Finally, since $W$ is a subspace of $\mathbb R^{d-r}$, it has a basis of at
most $d-r$ vectors, which may be chosen from the support of $\mathbf x$.
\end{proof}

The finite support assumption required by Lemma~\ref{lem:subspace-compression} holds in our setting. The row and column residual vectors have the form $a_i \cdot U$ and $V \cdot b_j$, where $a_i,b_j \in \{0,1\}^d$.  Hence, each of them can have at most $2^d$ possible values.

Lemma~\ref{lem:subspace-compression} shows that for a YES instance of the problem, that is, when $\Rbin(M) \leq d$ and the algorithm should accept with high probability, the subspace used with a fixed matrix $U$ can be chosen to be spanned by at most $d-r$ row residual vectors.  To make this dependence explicit, we denote such a subspace by $W_U$.

\begin{definition}
Consider a fixed decomposition $X \cdot Y = P$ of a core $P$ with real rank $r$,
and let $U\in\mathbb R^{d\times(d-r)}$ and $V\in\mathbb R^{(d-r)\times d}$
be the matrices associated with this decomposition as promised by Lemma~\ref{lem:central-identity}.
A linear subspace $W_U$ of $\mathbb R^{d-r}$ is {\em relevant for $U$} if it is spanned by at most $d-r$ vectors of the form $a \cdot U$, where $a \in \{0,1\}^d$.
\end{definition}

The relevant subspaces can be enumerated explicitly and this is what the testing algorithm for the binary rank will do. Specifically, there are only $2^d$ binary vectors $a\in\{0,1\}^d$, so for the fixed matrix $U$, we can first compute all vectors $a \cdot U$.
We then consider every collection of at most $d-r$ of these vectors and take its real linear span. Every subspace obtained in this way is relevant for $U$.
Conversely, if $W_U$ is relevant for $U$, then by definition there are vectors $a_1,\ldots,a_k\in\{0,1\}^d$, with $k\leq d-r$, such that
$W_U = \operatorname{span}\{a_1 \cdot U,\ldots,a_k \cdot U\}$.

The enumeration considers this particular collection of residual vectors and produces $W_U$.  Hence, every subspace $W_U$ relevant for $U$ appears in the enumeration.  Different collections may span the same subspace, but eliminating such repetitions is unnecessary.  Thus, we never enumerate arbitrary subspaces of $\mathbb R^{d-r}$, but enumerate only this finite family.

Lemma~\ref{lem:subspace-compression} guarantees that for a YES instance, a subspace with the required properties is contained in this family.
A subspace $W_U$ can be represented simply by the selected vectors that span it.
To test if another residual vector $a \cdot U$ belongs to $W_U$, we check if $a \cdot U$ is a real linear combination of these spanning vectors.  To test if $V \cdot b$ belongs to $W_U^\perp$, we check if it is orthogonal to every spanning vector of $W_U$.
Both are simple finite dimensional linear algebra computations.

\subsection{Keeping the products binary}

If $(a \cdot U)(V \cdot b)=0$ then the product
$a \cdot b = \widehat M_{i,j}+(a \cdot U)(V \cdot b) $ is equal to the core prediction $\widehat M_{i,j}$, but it does not ensure that $a \cdot b$ is a $0,1$ value.
Since $a$ and $b$ are $0,1$ vectors,
$a \cdot b$ is the number of coordinates that belong to both $\operatorname{supp}(a)$
and $\operatorname{supp}(b)$.  Thus, $a \cdot b\in\{0,1\}$ exactly when
the row extension $a$ and the column extension $b$ have at most one common coordinate.  The
compatibility graph $G$ defined next assures this combinatorial requirement.

Let $G$ be a graph with $d$ vertices whose edges are defined as follows.
For a set of row extensions $\mathcal A$, add an edge $\{s,t\}$ to $G$
if some $a\in\mathcal A$ has a $1$ in coordinates $s$ and $t$.
Thus, the support of every row extension $a \in \mathcal A$ is represented by a clique of $G$.

Now consider a column extension $b$ and recall that we want to have $a \cdot b \in \{0,1\}$.
By definition of $G$, if $\{s,t\}$ is an edge of $G$, then there exists a row extension $a \in \mathcal A$ for which both coordinates $s$ and $t$ are $1$.
Therefore, a column extension $b$ cannot have a $1$ in both coordinates
$s$ and $t$ for any edge $\{s,t\}$ of $G$.
Thus, the support of every column extension $b$ should be an independent set in $G$.
Hence, if the support of a row extension $a$ defines a clique in $G$
and the support of a column extension $b$ defines an independent set in $G$,
then their supports intersect in at most one vertex of $G$ and thus,
$a \cdot b\in\{0,1\}$.

Note that we could, of course, have defined $G$ so that the columns determine the edges of the graph, and then columns correspond to cliques and rows to independent sets.
The above discussion is summarized in the following lemma.

\begin{lemma}
\label{lem:graph-device}
Let $\mathcal A,\mathcal B\subseteq\{0,1\}^d$ be two sets of vectors for which $a \cdot b\leq1$ for every $a\in\mathcal A$ and $b\in\mathcal B$.  Then there is a graph $G$ with $d$ vertices
such that the support of every $a\in\mathcal A$ is a clique of $G$,
and the support of every $b\in\mathcal B$ is an independent set of $G$.
\end{lemma}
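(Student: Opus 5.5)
The construction is the one already suggested in the discussion preceding Lemma~\ref{lem:graph-device}. Take $G$ on vertex set $[d]$, and put an edge $\{s,t\}$ with $s\neq t$ exactly when some $a\in\mathcal A$ has $a_s=a_t=1$. The proof then checks the two required properties in turn.

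First, the support of every $a\in\mathcal A$ is a clique of $G$. This holds by construction: any two distinct coordinates $s,t\in\operatorname{supp}(a)$ are witnessed by $a$ itself, so $\{s,t\}$ is an edge. Supports of size $0$ or $1$ are trivially cliques.

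Second, the support of every $b\in\mathcal B$ is an independent set of $G$. We argue by contradiction. Suppose two distinct $s,t\in\operatorname{supp}(b)$ are adjacent in $G$. By definition of the edges there is some $a\in\mathcal A$ with $a_s=a_t=1$. Since $a$ and $b$ are $0,1$ vectors, $a\cdot b=|\operatorname{supp}(a)\cap\operatorname{supp}(b)|\geq|\{s,t\}|=2$. This contradicts the hypothesis $a\cdot b\leq 1$. Hence no two vertices of $\operatorname{supp}(b)$ are adjacent.

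There is no real obstacle here. The only points to make explicit are that the inner product of $0,1$ vectors counts the common support coordinates, and that $G$ has no loops, so singleton supports cause no issue. The degenerate cases $\mathcal A=\varnothing$ (which gives the empty graph) and $\mathcal B=\varnothing$ are covered by the same argument.
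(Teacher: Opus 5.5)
Your proof is correct and is essentially the paper's proof. You use the same graph (an edge $\{s,t\}$ exactly when some $a\in\mathcal A$ has $a_s=a_t=1$), cliques follow by construction, and independence of $\operatorname{supp}(b)$ follows from the contradiction $a\cdot b\geq 2$. Your remarks on loops, singleton supports and empty families are harmless clarifications that the paper leaves implicit.
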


\begin{proof}
Add an edge $\{s,t\}$ to $G$ if some $a\in\mathcal A$ has
$a_s = a_t = 1$.  Then the support of every row is a clique.  If the support of some
$b \in \mathcal B$ contains both $s$ and $t$ for some edge $\{s,t\}$ of $G$, then for the corresponding $a\in\mathcal A$ we would have $a \cdot b\geq 2$, a contradiction.
\end{proof}

The testing algorithm for the binary rank will enumerate the possible graphs
by considering all simple graphs on $d$ vertices,
where a graph is determined by a subset of all $\binom d 2$ possible edges. Thus, there are exactly $2^{\binom d2}$ possible graphs.

\subsection{Candidates}

Finally, we describe how a possible {\em candidate} extension of a binary decomposition $X \cdot Y$ of the core will be represented. The subspace $W_U$ specifies which row and column extensions can be chosen so that their residual products vanish, and hence, their products agree with the values predicted by the core. Finally, the graph $G$ specifies which such extensions have a binary product.  The fraction of rows and columns that fail these conditions will measure how well the candidate approximates $M$.

\begin{definition}
A {\em candidate} is a quadruple $\mathcal C = (X,Y,W_U,G)$, where:
\begin{enumerate}
\item $X \cdot Y = P$ is a binary decomposition of size $d$ of the core $P$.
\item $W_U$ is a relevant subspace for $U$, where $U,V$ are the fixed matrices
associated with $X,Y$ as promised by Lemma~\ref{lem:central-identity}.
\item $G$ is a graph with $d$ vertices.
\end{enumerate}
A row $i$ is {\em good} for a given candidate $\mathcal C$ if there is some
$a\in\operatorname{RowCand}(i)$ such that $a \cdot U\in W_U$ and $\operatorname{supp}(a)$
is a clique of $G$; otherwise, it is {\em bad}.
A column $j$ is good for a candidate $\mathcal C$ if there is some $b\in\operatorname{ColCand}(j)$ such that $V \cdot b\in W_U^\perp$ and $\operatorname{supp}(b)$ is an independent set of $G$; otherwise, it is bad.
\end{definition}

The family of candidates is nonempty since, as shown above, there is at least one binary decomposition $X,Y$ of the core, and for any such pair $X,Y$ the zero subspace $W_U = \{0\}$ is relevant for $U$ and it is always possible to choose any graph $G$ with $d$ vertices.
See Figure~\ref{fig:core-candidate-example} for an illustration.

\begin{figure}[!htb]
\centering
\begin{tikzpicture}[
  cell/.style={minimum width=5.5mm, minimum height=5.5mm, inner sep=0pt, font=\small},
  factorcell/.style={minimum width=5.5mm, minimum height=5.5mm, inner sep=0pt, font=\small},
  vertex/.style={circle, fill=black, inner sep=1.8pt}
]
\matrix (M) [matrix of nodes, nodes={cell}, row sep=0pt, column sep=0pt] at (0,0) {
0 & 0 & 1 & 1 & 1 \\
0 & 1 & 1 & 1 & 0 \\
1 & 1 & 0 & 1 & 1 \\
0 & 0 & 0 & 1 & 1 \\
1 & 0 & 0 & 0 & 1 \\
};
\draw[line width=0.4pt] (M-1-1.north west) rectangle (M-5-5.south east);
\foreach \c in {2,3,4,5} {
  \draw[line width=0.4pt] (M-1-\c.north west) -- (M-5-\c.south west);
}
\foreach \r in {2,3,4,5} {
  \draw[line width=0.4pt] (M-\r-1.north west) -- (M-\r-5.north east);
}
\draw[line width=1.2pt] (M-1-1.north west) rectangle (M-3-3.south east);
\node[above=2mm of M] {$M$};

\matrix (X) [matrix of nodes, nodes={factorcell}, row sep=0pt, column sep=0pt] at (4,0.3) {
1 & 1 & 0 & 0 \\
0 & 1 & 1 & 0 \\
0 & 0 & 1 & 1 \\
};
\draw[line width=0.4pt] (X-1-1.north west) rectangle (X-3-4.south east);
\foreach \c in {2,3,4} {
  \draw[line width=0.4pt] (X-1-\c.north west) -- (X-3-\c.south west);
}
\foreach \r in {2,3} {
  \draw[line width=0.4pt] (X-\r-1.north west) -- (X-\r-4.north east);
}
\node[above=2mm of X] {$X$};

\matrix (Y) [matrix of nodes, nodes={factorcell}, row sep=0pt, column sep=0pt] at (6.8,0.3) {
0 & 0 & 0 \\
0 & 0 & 1 \\
0 & 1 & 0 \\
1 & 0 & 0 \\
};
\draw[line width=0.4pt] (Y-1-1.north west) rectangle (Y-4-3.south east);
\foreach \c in {2,3} {
  \draw[line width=0.4pt] (Y-1-\c.north west) -- (Y-4-\c.south west);
}
\foreach \r in {2,3,4} {
  \draw[line width=0.4pt] (Y-\r-1.north west) -- (Y-\r-3.north east);
}
\node[above=2mm of Y] {$Y$};
\node at (6.65,0.3) {$\cdot$};

\coordinate (gcenter) at (10.5,0.15);
\node[vertex,label=above:$1$] (g1) at ($(gcenter)+(0,1.15)$) {};
\node[vertex,label=left:$2$]  (g2) at ($(gcenter)+(-1.0,0.25)$) {};
\node[vertex,label=right:$3$] (g3) at ($(gcenter)+(1.0,0.25)$) {};
\node[vertex,label=below:$4$] (g4) at ($(gcenter)+(0,-0.95)$) {};
\draw[thick] (g1)--(g2)--(g3)--(g4);
\node[above=3mm of g1] {$G$};
\end{tikzpicture}
\caption{An example of a candidate.  On the left, the  upper left $3\times3$ block is the core $P = M[I,J]$ of the matrix $M$, where the real rank of $P$ is $r = 3$, and the tested binary rank is $d = 4$.
In the middle, the $0,1$ matrices $X$ and $Y$ are a binary decomposition of $P$ of size $d = 4$.
On the right is a compatibility graph $G$ with $d = 4$ vertices, where the support of each row of $X$ is a clique of $G$, while the support of each column of $Y$ is an independent set of $G$.  The subspace $W_U$ is not shown.}
\label{fig:core-candidate-example}
\end{figure}

\begin{lemma}
\label{lem-candidate-count}
There are at most $2^{4d^2}$ candidates.
\end{lemma}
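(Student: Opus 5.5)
We bound the number of candidates by counting each of the three components of a quadruple $\mathcal C=(X,Y,W_U,G)$ independently and multiplying. Recall that the core $P$ is fixed and has size $r\times r$ with $r\le d$. The matrices $U,V$ are determined by the choice of $X,Y$, so they contribute no extra choices.

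\textbf{Step 1 (the pair $X,Y$).} The matrices $X$ and $Y$ are $0,1$ matrices of size $r\times d$ and $d\times r$. So there are at most $2^{rd}\cdot 2^{dr}\le 2^{2d^2}$ pairs, even before imposing the constraint $X\cdot Y=P$.

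\textbf{Step 2 (the subspace $W_U$).} A relevant subspace is, as in the enumeration described after the definition of relevance, the span of a collection of at most $d-r\le d$ vectors $a\cdot U$ with $a\in\{0,1\}^d$. We count collections, which is an overcount of subspaces. There are $2^d$ possible vectors $a$. A collection of at most $d$ of them can be encoded as a sequence of exactly $d$ entries, each entry either some $a\in\{0,1\}^d$ or an empty symbol. This gives at most $(2^d+1)^d\le 2^{d(d+1)}$ choices. We can tighten this by noting that $W_U$ is determined by the subset of $\{0,1\}^d$ used. If $d-r=0$ the only choice is $W_U=\{0\}$. In general the number of subsets of size at most $d$ of a $2^d$-element set is $\sum_{k\le d}\binom{2^d}{k}\le (2^d)^d=2^{d^2}$ for $d\ge 2$, and the case $d=1$ can be checked directly. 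So there are at most $2^{d^2}$ choices of $W_U$.

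\textbf{Step 3 (the graph $G$).} The graph $G$ is a simple graph on $d$ labeled vertices, so there are $2^{\binom d2}\le 2^{d^2/2}\le 2^{d^2}$ choices.

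Multiplying the three bounds gives at most $2^{2d^2}\cdot 2^{d^2}\cdot 2^{d^2}=2^{4d^2}$ candidates.

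The only delicate point is Step 2, where the count must stay within $2^{d^2}$. The $\binom{2^d}{\le d}$ bound needs a little care at small $d$. For $d=1$ we have $\binom{2}{0}+\binom{2}{1}=3>2$, but then $r\in\{0,1\}$. If $r=1$ only $W_U=\{0\}$ is possible. If $r=0$ the relevant subspaces of $\mathbb R^1$ are just $\{0\}$ and $\mathbb R^1$, which is 2 choices. Alternatively, the slack in the other factors absorbs this: for $d=1$, Steps 1 and 3 give $2^{2}\cdot 1$, and $4\cdot 3<2^4$. So the final bound holds for every $d\ge1$.
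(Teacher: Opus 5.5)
Your proof is correct and follows the paper's approach: you bound the pairs $X,Y$, the relevant subspaces $W_U$ and the graphs $G$ separately and multiply. The only difference is bookkeeping. The paper bounds the subspaces by $\sum_{\ell=0}^{d}(2^d)^\ell\le(d+1)2^{d^2}$ and absorbs the factor $d+1$ into the slack between $2^{\binom d2}$ and $2^{d^2}$, which avoids both your separate $d=1$ check and the unproved inequality $\sum_{k\le d}\binom{2^d}{k}\le 2^{d^2}$. In fact, your crude bound $2^{d(d+1)}$ would already suffice if you kept the exact graph count $2^{\binom d2}$.
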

\begin{proof}
There are at most $2^{2rd}\leq 2^{2d^2}$ pairs $X,Y$.
For a fixed pair $X,Y$, there are at most $2^d$ possible vectors $a \cdot U$.
A subspace $W_U$ which is relevant for $U$ is spanned by a collection of at most $d-r\leq d$ of these vectors, so the number of relevant
subspaces for $U$ is at most $\sum_{\ell=0}^{d}(2^d)^\ell\leq(d+1)2^{d^2}$.
There are also $2^{\binom d2}$ graphs on $d$ vertices.
Thus, the total number of candidates is at most
$ 2^{2d^2}(d+1)2^{d^2}2^{\binom d2}\leq 2^{4d^2}$.
\end{proof}

\subsection{The testing algorithm for the binary rank}
\label{subsec-alg}

We now describe Algorithm~2 which is the complete testing algorithm for the binary rank.
Step~1 of Algorithm~2 uses Algorithm~1 to find a core $P$, while Steps 3 through 6 of Algorithm~2 estimate, for each candidate, the fractions of rows and columns that are bad for it, and accept if some candidate has sufficiently few sampled bad rows and columns.

\begin{center}
\fbox{%
\begin{minipage}{0.94\textwidth}
\textbf{Algorithm 2: Test $M$ for binary rank at most $d$, given $d$ and $\epsilon$.}

\begin{enumerate}
\item Call Algorithm~1 and let $P = M[I,J]$ be the output core of size $r\times r$.
\item If $r=d+1$, \textbf{reject}.
\item Otherwise, enumerate all candidates $\mathcal C = (X,Y,W_U,G)$.
\item Set
$$
q = \left\lceil\frac{10}{\epsilon}\left(4d^2\ln2+\ln6\right)\right\rceil,
$$
and independently sample $q$ uniformly random row indices $i_1,\ldots,i_q$ and $q$
uniformly random column indices $j_1,\ldots,j_q$.  Query $M[i_s,J]$ for every sampled
row index and $M[I,j_t]$ for every sampled column index.
\item
For each candidate $\mathcal C$, compute
$$
Z_{\mathcal C}=|\{s \;| \; i_s\text{ is bad for }\mathcal C\}| +|\{t \;|\; j_t\text{ is bad for }\mathcal C\}|.
$$
\item If some candidate satisfies $Z_{\mathcal C}\leq q\epsilon/2$ then
\textbf{accept}; otherwise, \textbf{reject}.
\end{enumerate}
\end{minipage}}
\end{center}

We first note that the query complexity of Algorithm~2 is
$O\left(d^3 \log(d+1)/ \epsilon^2 \right)$ as promised.
Indeed, Algorithm~1 has at most $d+1$ iterations, and in each iteration it samples
$T = O(d\log(d+1)/\epsilon^2)$  pairs of row and column, and for each pair performs
at most $2r+1 \leq 2d+1$ queries for a core whose final size is $r \times r$.
Thus,  Algorithm~1 uses $O(d^3\log(d+1)/\epsilon^2)$ queries.

In the second phase, Algorithm~2 samples $q$ rows and $q$ columns and queries at most $d$ entries for each sampled row or column, for a total of at most $2qd=O(d^3/\epsilon)$ additional queries.  Since $0< \epsilon \leq 1$, the first bound dominates.

\begin{lemma}
\label{lem:running-time}
Assuming a query costs $O(1)$, Algorithm~2 uses $2^{O(d^2)}\operatorname{poly}(d,1/\epsilon)$ arithmetic operations.
\end{lemma}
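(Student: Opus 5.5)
The plan is to bound the arithmetic cost of each step of Algorithm~2 separately and then sum.

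\textbf{Step 1 (Algorithm~1).} There are at most $d+1$ iterations of the while loop, each with $T=O(d\log(d+1)/\epsilon^2)$ sampled pairs. For each pair we decide whether $\Rreal(M[I\cup\{i_t\},J\cup\{j_t\}])=r+1$. By Lemma~\ref{lem:prediction-rank-growth} this is equivalent to checking $M_{i,j}\neq\widehat M_{i,j}$, which can be done in $O(d^3)$ operations, either by Gaussian elimination on an $(r+1)\times(r+1)$ matrix or by maintaining $P^{-1}$. Hence Step~1 costs $\operatorname{poly}(d,1/\epsilon)$ operations.

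\textbf{Step 3 (enumerating candidates).} We enumerate all at most $2^{2d^2}$ pairs $X,Y$ and keep those with $X\cdot Y=P$, at cost $O(d^3)$ each. For each surviving pair we:
\begin{itemize}
\item compute $C=I_d-Y P^{-1}X$ and a rank factorization $C=U\cdot V$ (for instance, pick $d-r$ independent columns of $C$ as $U$ and solve for $V$), in $\operatorname{poly}(d)$ operations;
\item compute the $2^d$ vectors $a\cdot U$ and the $2^d$ vectors $V\cdot b$;
\item enumerate the at most $(d+1)2^{d^2}$ spanning collections of $W_U$ and, for each, a basis of $W_U$ together with a basis of $W_U^\perp$ (or a projector), in $\operatorname{poly}(d)$ operations;
\item enumerate the $2^{\binom d2}$ graphs $G$.
\end{itemize}
By Lemma~\ref{lem-candidate-count}, this yields at most $2^{4d^2}$ candidates, each produced with $\operatorname{poly}(d)$ work, for a total of $2^{O(d^2)}\operatorname{poly}(d)$.

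\textbf{Step 5 (computing $Z_{\mathcal C}$).} This is the dominant step. For each candidate and each of the $2q$ samples, we decide goodness by brute force over all $2^d$ vectors $a$ (respectively $b$). For each such vector we check three conditions:
\begin{itemize}
\item $a\cdot Y=M[i_s,J]$, in $O(d^2)$ operations;
\item $a\cdot U\in W_U$, in $\operatorname{poly}(d)$ operations by testing whether the vector is annihilated by a precomputed basis of $W_U^\perp$;
\item $\operatorname{supp}(a)$ is a clique of $G$ (respectively an independent set), in $O(d^2)$ operations.
\end{itemize}
Each membership test therefore costs $2^d\operatorname{poly}(d)$. Multiplying by $q=O(d^2/\epsilon)$ samples and $2^{4d^2}$ candidates gives $2^{O(d^2)}\operatorname{poly}(d,1/\epsilon)$. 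Steps~2, 4 and~6 are clearly within this bound.

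The only subtle point is the arithmetic model. The entries of $P^{-1}$, $U$ and $V$ are rational numbers, so we should note that "arithmetic operations" counts exact rational operations. Alternatively, we can observe that all quantities are rationals whose bit length is $\operatorname{poly}(d)$, since they are determinants of $0,1$ matrices of size at most $d$, so exact equality and membership tests are well-defined. With this in place, summing the step costs proves the lemma.
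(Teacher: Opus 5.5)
Your proposal is correct and follows the paper's proof essentially step by step: Algorithm~1 costs $\operatorname{poly}(d,1/\epsilon)$ via $(d+1)T$ small rank tests, the enumeration of at most $2^{4d^2}$ candidates costs $2^{O(d^2)}\operatorname{poly}(d)$, and brute-force goodness checks over all $2^d$ vectors for each of the $2q=O(d^2/\epsilon)$ samples and each candidate give the stated bound. Your closing bit-length remark goes beyond the paper, which simply works in the unit-cost arithmetic model. It is slightly loose, since the entries of $P^{-1}$, $U$, $V$ are ratios of minors of $0,1$ matrices rather than determinants themselves, but the $\operatorname{poly}(d)$ bit-length conclusion still holds.
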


\begin{proof}
Algorithm~1 performs at most $(d+1)T = O(d^2\log(d+1)/\epsilon^2)$ real rank
tests on matrices of order at most $d+1$.  Using Gaussian elimination,
this takes polynomial time in $d$ per test, and hence, polynomial time in
$d$ and $1/\epsilon$ overall.

By Lemma~\ref{lem-candidate-count}, the number $N$ of candidates examined by Algorithm~2 in the second phase is at most $N \leq 2^{4d^2}$.
The enumeration itself can be carried out within $2^{O(d^2)}\operatorname{poly}(d)$ arithmetic operations by enumerating the binary decompositions $P = X \cdot Y$,
the spanning sets defining the relevant subspaces $W_U$ and the graphs $G$, as described
in the paper.

It remains to determine for each candidate, if a sampled row or column is
good.  For a sampled row $i$, it is possible to enumerate all vectors
$a\in\{0,1\}^d$ and check if $a \cdot Y = M[i,J]$ and $a \cdot U\in W_U$ and if
$\operatorname{supp}(a)$ is a clique of $G$.  This requires
$2^d\operatorname{poly}(d)$ operations.  The test for the columns is analogous.

Since there are $2q = O(d^2/\epsilon)$ sampled rows and columns and at most
$2^{4d^2}$ candidates, the total work in this phase is
$N \cdot q \cdot 2^d\operatorname{poly}(d)=2^{O(d^2)}\operatorname{poly}(d,1/\epsilon)$.
\end{proof}

We now prove the completeness and soundness of Algorithm~2.
The next lemma shows that every candidate implies a binary matrix $M'$ of binary rank at most $d$
whose distance from $M$ is bounded by the fraction of bad rows and columns of the candidate.

\begin{lemma}
\label{lem:candidate-representation}
Let $\mathcal C=(X,Y,W_U,G)$ be a candidate, where $X\cdot Y$ is a binary
decomposition of the core $P=M[I,J]$, and let
$$
\rho_{\mathcal C}= \frac{|\{i\in[n]\mid i\text{ is bad for }\mathcal C\}|}{n},
\qquad
\gamma_{\mathcal C}= \frac{|\{j\in[m]\mid j\text{ is bad for }\mathcal C\}|}{m}.
$$
Then there is a binary matrix $M'$ with $\Rbin(M')\leq d$ such that
$\operatorname{dist}(M,M') \leq \rho_{\mathcal C}+\gamma_{\mathcal C}+\delta(I,J)$.
Consequently, if $M$ is $\epsilon$-far from binary rank at most $d$, then every
candidate satisfies $\rho_{\mathcal C}+\gamma_{\mathcal C}+\delta(I,J)\geq\epsilon$.
\end{lemma}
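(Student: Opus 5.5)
Fix the candidate $\mathcal C=(X,Y,W_U,G)$ with its matrices $U,V$ from Lemma~\ref{lem:central-identity}. We build an explicit binary decomposition $M'=A'\cdot B'$. For every good row $i$, let $a_i$ be a witness: some $a\in\operatorname{RowCand}(i)$ with $a\cdot U\in W_U$ and $\operatorname{supp}(a)$ a clique of $G$. For every bad row, set $a_i=0\in\{0,1\}^d$. Symmetrically, for every good column $j$ let $b_j$ be a witness in $\operatorname{ColCand}(j)$ with $V\cdot b_j\in W_U^\perp$ and independent support, and set $b_j=0$ for bad columns. Let $A'$ have rows $a_i$ and $B'$ have columns $b_j$, and put $M'=A'\cdot B'$. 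By construction $A',B'$ are $0,1$ matrices with $d$ columns and $d$ rows, respectively.

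First we check that $M'$ is a $0,1$ matrix, so that $\Rbin(M')\le d$ by definition. If $i$ or $j$ is bad, then $M'_{i,j}=0$. If both are good, then $\operatorname{supp}(a_i)$ is a clique of $G$ and $\operatorname{supp}(b_j)$ is an independent set of $G$. A clique and an independent set share at most one vertex, so $a_i\cdot b_j\in\{0,1\}$; this is the reasoning preceding Lemma~\ref{lem:graph-device}.

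Next we bound the distance. For good $i$ and good $j$, the vectors satisfy $a_i\in\operatorname{RowCand}(i)$ and $b_j\in\operatorname{ColCand}(j)$, so Lemma~\ref{lem:central-identity} gives $M'_{i,j}=\widehat M_{i,j}+(a_i\cdot U)(V\cdot b_j)$. The residual term vanishes because $a_i\cdot U\in W_U$ and $V\cdot b_j\in W_U^\perp$. Hence $M'_{i,j}=\widehat M_{i,j}$ on all good-good pairs, and such a pair contributes to $\operatorname{dist}(M,M')$ only if $M_{i,j}\ne\widehat M_{i,j}$. Therefore every disagreement $(i,j)$ falls into one of three sets: $i$ bad (at most $\rho_{\mathcal C}nm$ entries), $j$ bad (at most $\gamma_{\mathcal C}nm$ entries), or $M_{i,j}\ne\widehat M_{i,j}$ (exactly $\delta(I,J)nm$ entries). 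A union bound then gives $\operatorname{dist}(M,M')\le\rho_{\mathcal C}+\gamma_{\mathcal C}+\delta(I,J)$.

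The \emph{consequently} part follows at once. If $M$ is $\epsilon$-far, then $\operatorname{dist}(M,M')\ge\epsilon$ for this $M'$, which forces $\rho_{\mathcal C}+\gamma_{\mathcal C}+\delta(I,J)\ge\epsilon$.

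The main points to handle carefully are the edge cases. For $r=0$, there is no core restriction, $\widehat M\equiv 0$, and $C=I_d$; Lemma~\ref{lem:central-identity} already covers this. For $d=r$, we have $W_U=\{0\}=\mathbb R^0$, and the residual is trivially zero. Beyond these, the only subtle step is confirming that zeroing out bad rows and columns preserves the $0,1$ property and costs only the stated fractions, which the case split above handles.
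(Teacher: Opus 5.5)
Your proof is correct and follows the paper's argument. It picks witnesses on good rows and columns and zero vectors elsewhere, uses Lemma~\ref{lem:central-identity} with $W_U\perp W_U^\perp$ to get $M'_{i,j}=\widehat M_{i,j}$ on good--good pairs, uses the clique/independent-set property for binarity, and finishes with a union bound over bad rows, bad columns and prediction errors. The paper additionally assigns the rows of $X$ and columns of $Y$ to the core indices $I$ and $J$ and checks their compatibility. That step does not improve the stated bound, so leaving it out is a harmless simplification.
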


\begin{proof}
We show how to construct the required matrix $M'$. Let
$$
R_{\mathrm{good}}=\{i\in[n]\mid i\text{ is good for }\mathcal C\},
\qquad
C_{\mathrm{good}}=\{j\in[m]\mid j\text{ is good for }\mathcal C\}.
$$
For each $i\in R_{\mathrm{good}}$, choose a row extension
$a_i\in\operatorname{RowCand}(i)$ satisfying the conditions in the definition
of a good row, and for each $j\in C_{\mathrm{good}}$, choose a column extension
$b_j\in\operatorname{ColCand}(j)$ satisfying the conditions in the definition
of a good column.

For every $i\in R_{\mathrm{good}}$ and $j\in C_{\mathrm{good}}$, we have
$a_i\cdot U\in W_U$ and $V\cdot b_j\in W_U^\perp$.  Hence,
$(a_i\cdot U)(V\cdot b_j)=0$, and by Lemma~\ref{lem:central-identity},
$a_i\cdot b_j=\widehat M_{i,j}$.
Moreover, the support of $a_i$ is a clique of $G$ and the support of $b_j$ is
an independent set of $G$.  Thus, their supports intersect in at most one
coordinate, and $a_i\cdot b_j\in\{0,1\}$.
Thus, all pairs of extensions chosen for good rows and good columns are mutually
compatible as required: their product is binary and equals the value predicted by the core.

We now add the set of rows $I$ and set of columns $J$ of the core.  Define
$\mathrm{Rows}_{\mathcal C}=R_{\mathrm{good}}\cup I$ and
$\mathrm{Cols}_{\mathcal C}=C_{\mathrm{good}}\cup J$.
For every row $i\in I$, use the corresponding row $x_i$ of $X$, and for
every core column $j\in J$, use the corresponding column $y_j$ of $Y$.
These are valid row and column extensions because $X\cdot Y=P$. That is,
$x_i\cdot Y=M[i,J]$ and $X\cdot y_j=M[I,j]$.

It remains to check that these added rows and columns of the core are compatible with
the extensions already chosen on the good rows and columns.  If $i\in I$ and
$j\in C_{\mathrm{good}}$, then $X\cdot b_j = M[I,j]$, and thus, $x_i\cdot b_j = M_{i,j}$.
Similarly, if $i\in R_{\mathrm{good}}$ and $j\in J$, then
$a_i\cdot Y = M[i,J]$, and hence,  $a_i\cdot y_j=M_{i,j}$.
Finally, if $i\in I$ and $j\in J$, then $x_i\cdot y_j=P_{i,j}=M_{i,j}$.

Thus, on every pair involving a row or a column of the core,
the core prediction is exact, so all these products are also equal to
$\widehat M_{i,j}$ and are in $\{0,1\}$.

Now let $A'$ be a $0,1$ matrix of size $n\times d$ defined by using the chosen row extension on every row in $\mathrm{Rows}_{\mathcal C}$ and the all-zero row on every row outside $\mathrm{Rows}_{\mathcal C}$.
Similarly, let $B'$ be a $0,1$ matrix of size $d\times m$ defined by using the chosen column extension on every column in $\mathrm{Cols}_{\mathcal C}$ and the all-zero column on every column outside $\mathrm{Cols}_{\mathcal C}$. Set $M' = A' \cdot B'$.

If $i\notin \mathrm{Rows}_{\mathcal C}$ or $j\notin \mathrm{Cols}_{\mathcal C}$, then $M'_{i,j}=0$.  If $i\in \mathrm{Rows}_{\mathcal C}$ and
$j\in \mathrm{Cols}_{\mathcal C}$, then by the compatibility just proved,
$M'_{i,j}\in\{0,1\}$.  Hence, $M'$ is a $0,1$ matrix.  Since it has a binary
decomposition $M' = A'\cdot B'$ of size $d$, then $\Rbin(M')\leq d$.

We finally bound the distance of $M'$ and $M$.  If $i\in \mathrm{Rows}_{\mathcal C}$, $j\in \mathrm{Cols}_{\mathcal C}$, and the
core prediction is correct at $(i,j)$, then $M'_{i,j}=\widehat M_{i,j} = M_{i,j}$.
Thus, $M$ and $M'$ can differ only on a row or a column outside $\mathrm{Rows}_{\mathcal C}$ and $\mathrm{Cols}_{\mathcal C}$, respectively, or at an entry where the core prediction is wrong.

By the assumptions of the lemma and the definition of good and bad rows
$1-|R_{\mathrm{good}}|/n =\rho_{\mathcal C}$ and
$1- |C_{\mathrm{good}}|/m =\gamma_{\mathcal C}$.
Since $\mathrm{Rows}_{\mathcal C}=R_{\mathrm{good}}\cup I$ and
$\mathrm{Cols}_{\mathcal C}=C_{\mathrm{good}}\cup J$, then
$$
1-\frac{|\mathrm{Rows}_{\mathcal C}|}{n} \leq 1-\frac{|R_{\mathrm{good}}|}{n}
=\rho_{\mathcal C},
\qquad
1-\frac{|\mathrm{Cols}_{\mathcal C}|}{m} \leq 1-\frac{|C_{\mathrm{good}}|}{m}
=\gamma_{\mathcal C}.
$$
The set of entries where the core prediction is wrong has normalized size
$\delta(I,J)$.  Taking the union of these three possible sources of disagreement
gives $\operatorname{dist}(M,M') \leq \rho_{\mathcal C}+\gamma_{\mathcal C}+\delta(I,J)$.
If $M$ is $\epsilon$-far from binary rank at most $d$, the left-hand side of the above inequality is at least $\epsilon$ for every binary matrix $M'$ of binary rank at most $d$, which
gives the final assertion.
\end{proof}

The preceding discussion shows that any candidate with few bad rows and
columns yields a matrix $M'$ with $\Rbin(M') \leq d$ which is not far from $M$.
We now prove the converse: if $\Rbin(M) \leq d$, then a core with a small prediction error has such a candidate.

\begin{lemma}
\label{lem:completeness-candidate}
Assume that $\Rbin(M)\leq d$.  For every core $P = M[I,J]$ of full real rank
$r\leq d$, there is a candidate $\mathcal C$ such that
$\rho_{\mathcal C}+\gamma_{\mathcal C}\leq2\sqrt{d\,\delta(I,J)}$.
\end{lemma}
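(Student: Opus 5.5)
Since $\Rbin(M)\le d$, we fix a binary decomposition $M=A\cdot B$ with $A,B$ of size $n\times d$ and $d\times m$, padding with zero coordinates if necessary. We then set $X=A_I$ and $Y=B_J$, so that $X\cdot Y=P$ is a binary decomposition of size $d$ of the core. Let $U,V$ be the fixed matrices associated with $X,Y$ by Lemma~\ref{lem:central-identity}. For every row $i$ we have $a_i\cdot Y=M[i,J]$, so $a_i\in\operatorname{RowCand}(i)$. Similarly $b_j\in\operatorname{ColCand}(j)$ for every column $j$. Lemma~\ref{lem:central-identity} then gives, for every pair $i,j$,
$$M_{i,j}=a_i\cdot b_j=\widehat M_{i,j}+(a_i\cdot U)(V\cdot b_j).$$
Hence $(a_i\cdot U)(V\cdot b_j)\neq 0$ exactly when $M_{i,j}\neq\widehat M_{i,j}$.

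Next we apply Lemma~\ref{lem:subspace-compression}. Let $\mathbf x=a_{\mathbf i}\cdot U$ and $\mathbf y=V\cdot b_{\mathbf j}$, where $\mathbf i\in[n]$ and $\mathbf j\in[m]$ are independent and uniform. These random vectors are independent and have finite supports, since each takes at most $2^d$ values. By the identity above, $\Pr[\mathbf x\cdot\mathbf y\neq0]=\delta(I,J)$. The lemma yields a subspace $W$ spanned by at most $d-r$ vectors of the form $a_i\cdot U$, and so $W$ is relevant for $U$. It satisfies
$$\Pr_i[a_i\cdot U\notin W]+\Pr_j[V\cdot b_j\notin W^\perp]\le 2\sqrt{(d-r)\delta(I,J)}\le 2\sqrt{d\,\delta(I,J)}.$$
The case $d=r$ is covered by the convention $W=\{0\}$. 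Here both residual spaces are trivial, so every row and column passes the subspace condition.

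For the graph, we take $G$ from Lemma~\ref{lem:graph-device} with $\mathcal A=\{a_i\}$ over all rows and $\mathcal B=\{b_j\}$ over all columns. The hypothesis of that lemma holds because $a_i\cdot b_j=M_{i,j}\le 1$ for all $i,j$. Thus every $\operatorname{supp}(a_i)$ is a clique of $G$ and every $\operatorname{supp}(b_j)$ is an independent set of $G$.

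We then check the candidate $\mathcal C=(X,Y,W,G)$. Any row $i$ with $a_i\cdot U\in W$ is good for $\mathcal C$, witnessed by $a=a_i$. Any column $j$ with $V\cdot b_j\in W^\perp$ is good, witnessed by $b=b_j$. Therefore $\rho_{\mathcal C}\le\Pr_i[a_i\cdot U\notin W]$ and $\gamma_{\mathcal C}\le\Pr_j[V\cdot b_j\notin W^\perp]$, and summing gives the claimed bound.

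The main point needing care is that $\Pr[\mathbf x\cdot\mathbf y\neq 0]$ equals $\delta(I,J)$ exactly. This relies on the identity from Lemma~\ref{lem:central-identity} holding for every pair, including pairs with $i\in I$ or $j\in J$, where the prediction is exact. It also relies on the candidate using the same fixed $U,V$ that were chosen for the pair $X,Y$.
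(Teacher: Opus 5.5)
Your proof is correct and follows the paper's argument essentially step for step: you take $X=A_I$, $Y=B_J$, apply Lemma~\ref{lem:subspace-compression} to the residual vectors using $\Pr[\mathbf x\cdot\mathbf y\neq0]=\delta(I,J)$, and then use the graph device. The only difference is harmless: you build $G$ from all rows of $A$ and all columns of $B$, while the paper uses only those whose residuals lie in $W_U$ and $W_U^\perp$; both choices work because $a_i\cdot b_j=M_{i,j}\le 1$ for every pair.
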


\begin{proof}
Fix a binary decomposition $M = A \cdot B$ of size $d$, padding with zero coordinates if
necessary, and set $X = A_I$ and $Y = B_J$.  Then $X \cdot Y = M[I,J] = P$, so $X,Y$ is
among the pairs enumerated by the testing algorithm when going over all candidates.

For every row $i$, the $i$'th row $a_i$ of $A$ belongs to $\operatorname{RowCand}(i)$.
Indeed, since $Y = B_J$ then $a_i\cdot Y=a_i\cdot B_J=M[i,J]$.
Similarly, for every column $j$, the $j$'th column $b_j$ of $B$ is in $\operatorname{ColCand}(j)$, since $X\cdot b_j=A_I\cdot b_j=M[I,j]$.
Therefore, Lemma~\ref{lem:central-identity} applied to these row and column extensions gives
$M_{i,j}=a_i\cdot b_j =\widehat M_{i,j}+(a_i\cdot U)(V\cdot b_j)$,
and hence, $ M_{i,j}-\widehat M_{i,j}=(a_i\cdot U)(V\cdot b_j)$.
For independent and uniform $i$ and $j$, the random vectors $a_i \cdot U$ and $V \cdot b_j$ are independent and satisfy $ \Pr[(a_i \cdot U)(V \cdot b_j)\neq 0]=\delta(I,J)$.
Thus, Lemma~\ref{lem:subspace-compression} applies and there is a relevant linear subspace $W_U$ of $\mathbb R^{d-r}$ such that
$$
\Pr_i[a_i \cdot U\notin W_U]+\Pr_j[V \cdot b_j\notin W_U^\perp]
\leq2\sqrt{(d-r)\delta(I,J)}\leq2\sqrt{d\,\delta(I,J)}.
$$
Let $\mathcal A$ and $\mathcal B$ be the set of row vectors $a$ of $A$ and column vectors
$b$ of $B$, respectively, for which $a \cdot U\in W_U$ and  $V \cdot b\in W_U^\perp$.
These rows and columns serve as row and column extensions of the binary decomposition of the core.
Since every $a \in \mathcal A$ is a row of $A$ and every $b \in \mathcal B$ is a column of $B$, there exist indices $i,j$ such that $a = a_i$ and $b = b_j$.  Hence, since $M = A\cdot B$,
then $a \cdot b = a_i\cdot b_j = M_{i,j}\in\{0,1\}$.
Thus, Lemma~\ref{lem:graph-device} guarantees a graph $G$ for which the support of every
$a \in \mathcal A$ is a clique and the support of every $b \in \mathcal B$ is an independent set.

Thus, every row represented in $\mathcal A$ and every column represented in
$\mathcal B$ is good for the candidate $(X,Y,W_U,G)$.
Hence, $\rho_{\mathcal C}\leq \Pr_i[a_i\cdot U\notin W_U]$ and
$\gamma_{\mathcal C}\leq \Pr_j[V\cdot b_j\notin W_U^\perp]$.
Combining these inequalities with the bound above gives
$\rho_{\mathcal C}+\gamma_{\mathcal C} \leq 2\sqrt{d\,\delta(I,J)}$ as claimed.
\end{proof}

Finally we show that if $M$ is $\epsilon$-far from binary rank at most $d$ then every candidate has many bad rows or columns, while the preceding lemma says that if $M$ has binary rank at most $d$ then there exists at least one candidate with few bad rows and columns.
Thus, the final phase of the testing algorithm only has to estimate by sampling the percentage of bad rows and bad columns of each candidate defined for a given core.
The following corollary summarizes the percentage of bad rows and columns in each case.

\begin{corollary}
\label{cor:gap}
Assume that $\delta(I,J) \leq \eta$, where $\eta = \epsilon^2 /(256 d)$ is as defined in Algorithm~1.
\begin{enumerate}
\item If $\Rbin(M)\leq d$, then some candidate $\mathcal C$ satisfies
$\rho_{\mathcal C}+\gamma_{\mathcal C}\leq\epsilon/8$.
\item If $M$ is $\epsilon$-far from binary rank at most $d$, then every
candidate $\mathcal C$ satisfies
$\rho_{\mathcal C}+\gamma_{\mathcal C}\geq 255\epsilon/256$.
\end{enumerate}
\end{corollary}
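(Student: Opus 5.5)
The corollary follows by plugging the assumed bound $\delta(I,J)\le\eta$ into the two preceding lemmas. No new ideas are needed, only arithmetic with $\eta=\epsilon^2/(256d)$.

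For part~1, assume $\Rbin(M)\le d$. The core $P=M[I,J]$ returned by Algorithm~1 has full real rank $r\le d$, so Lemma~\ref{lem:completeness-candidate} applies. It gives a candidate $\mathcal C$ with $\rho_{\mathcal C}+\gamma_{\mathcal C}\le 2\sqrt{d\,\delta(I,J)}$. Using $\delta(I,J)\le\eta$ we obtain
$$2\sqrt{d\,\delta(I,J)}\le 2\sqrt{d\cdot\frac{\epsilon^2}{256d}}=\frac{2\epsilon}{16}=\frac{\epsilon}{8},$$
which is the claimed bound. The only point to check is that the lemma's hypothesis holds: the core really has full rank $r\le d$. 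This is the case whenever Algorithm~2 reaches the candidate stage.

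For part~2, assume $M$ is $\epsilon$-far from binary rank at most $d$. Lemma~\ref{lem:candidate-representation} gives, for every candidate, $\rho_{\mathcal C}+\gamma_{\mathcal C}\ge \epsilon-\delta(I,J)\ge\epsilon-\eta$. It remains to show $\eta\le\epsilon/256$. We have $\eta=\epsilon^2/(256d)$, and $d\ge1$ together with $\epsilon\le1$ gives $\epsilon^2/d\le\epsilon$. Hence
$$\rho_{\mathcal C}+\gamma_{\mathcal C}\ge\epsilon-\frac{\epsilon}{256}=\frac{255\epsilon}{256}.$$

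There is no real obstacle here. The only mild care needed is to use $\epsilon\le1$ and $d\ge1$ in the bound $\eta\le\epsilon/256$, and to note that part~1 needs $r\le d$, which holds because the case $r=d+1$ is rejected before the candidate stage.
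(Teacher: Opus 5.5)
Your proposal is correct and matches the paper's proof: part~1 plugs $\delta(I,J)\le\eta$ into Lemma~\ref{lem:completeness-candidate} to get $2\sqrt{d\eta}=\epsilon/8$, and part~2 combines Lemma~\ref{lem:candidate-representation} with $\eta\le\epsilon/256$. On the hypothesis $r\le d$, the YES case needs no appeal to the algorithm's rejection step, since any full-rank core already satisfies $r\le\Rreal(M)\le\Rbin(M)\le d$.
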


\begin{proof}
The first item follows from Lemma~\ref{lem:completeness-candidate} since
$2 \sqrt{d \delta(I,J)} \leq 2\sqrt{d\eta}=\epsilon/8$.
For the second item, Lemma~\ref{lem:candidate-representation}
gives $\rho_{\mathcal C}+\gamma_{\mathcal C}\geq\epsilon-\delta(I,J)
\geq\epsilon-\eta \geq 255\epsilon/256$, where the last inequality holds
since $0<\epsilon\leq 1$ and $d\geq 1$ and therefore, $\eta \leq \epsilon /256$.
\end{proof}

We now prove Theorem~\ref{theo-main}, where for convenience we restate it.
\begin{theorem}
\label{thm:binary-rank-tester}
Let $d \geq 1$ and $0 <\epsilon \leq 1$.
Algorithm~2 is an adaptive $2$-sided error testing algorithm for the binary rank.
The algorithm succeeds with probability at least $2/3$ and uses
$O(d^3\log(d+1)/\epsilon^2)$ queries.
\end{theorem}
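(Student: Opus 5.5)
The query complexity bound has already been established in the discussion following Algorithm~2, so the plan is to prove the two correctness guarantees: completeness with probability at least $2/3$ and soundness with probability at least $2/3$. I will do this by combining Lemma~\ref{lem:core-stops}, Corollary~\ref{cor:gap}, and a Chernoff plus union bound over the at most $2^{4d^2}$ candidates of Lemma~\ref{lem-candidate-count}. Two bad events are allowed: Algorithm~1 stops with a core with $\delta(I,J)>\eta$ (probability at most $1/6$), and the sampling in Step~4 misestimates some relevant candidate (probability to be bounded by $1/6$). Their union has probability at most $1/3$.

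\textbf{Completeness.} Assume $\Rbin(M)\le d$. Then $\Rreal(M)\le d$, so Algorithm~1 can never build a full-rank core of size $d+1$. Hence Step~2 never rejects, and by Lemma~\ref{lem:core-stops}, with probability at least $5/6$ the final core satisfies $\delta(I,J)\le\eta$. Conditioned on this, and with $I,J$ fixed (the Step~4 samples are fresh and independent of the core), Corollary~\ref{cor:gap} gives a candidate $\mathcal C$ with $\rho_{\mathcal C}+\gamma_{\mathcal C}\le\epsilon/8$. Now $Z_{\mathcal C}$ is a sum of $2q$ independent indicators with mean $q(\rho_{\mathcal C}+\gamma_{\mathcal C})\le q\epsilon/8$. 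A multiplicative Chernoff bound (or Markov if one is lazy, though Chernoff fits the constant better) shows $\Pr[Z_{\mathcal C}>q\epsilon/2]\le e^{-cq\epsilon}$. With $q\ge\frac{10}{\epsilon}(4d^2\ln 2+\ln 6)$ this is at most $1/6$. The same bound applies if the mean is smaller, by monotonicity, so we may dominate by a binomial with mean exactly $q\epsilon/8$. So the algorithm accepts with probability at least $1-1/6-1/6=2/3$.

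\textbf{Soundness.} Assume $M$ is $\epsilon$-far. If Algorithm~1 reaches $r=d+1$, the algorithm rejects. Otherwise, with probability at least $5/6$ we have $\delta(I,J)\le\eta$, and then Corollary~\ref{cor:gap} says every candidate has $\rho_{\mathcal C}+\gamma_{\mathcal C}\ge 255\epsilon/256$. So $\mathbb E[Z_{\mathcal C}]\ge q\cdot255\epsilon/256$. A lower-tail Chernoff bound gives $\Pr[Z_{\mathcal C}\le q\epsilon/2]\le\exp(-(\text{const})\,q\epsilon)$, with deviation roughly $1/2$ of the mean, i.e.\ about $\exp(-q\epsilon/8)$ or so. Taking a union bound over at most $2^{4d^2}$ candidates, the failure probability is at most $2^{4d^2}e^{-q\epsilon/10}\le 1/6$, using the choice of $q$. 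One subtlety is that the candidate family depends on the core, but it is fixed before the Step~4 samples are drawn, so conditioning on the core makes the union bound legitimate. Hence rejection occurs with probability at least $2/3$.

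\textbf{Main obstacle.} The main obstacle is verifying that the concrete Chernoff constants match the factor $10$ in $q$. The soundness bound is the tight one, since it needs a union bound over $2^{4d^2}$ candidates. Here I would use the form $\Pr[Z\le(1-\beta)\mu]\le e^{-\beta^2\mu/2}$ with $\mu\ge 255q\epsilon/256$ and $(1-\beta)\mu=q\epsilon/2$, so $\beta\ge 0.49$ and the exponent is at least about $0.119\,q\epsilon\ge q\epsilon/10$. For completeness, the upper tail $\Pr[Z\ge t]\le 2^{-t}$ for $t\ge 6\mu$ applies directly, since $q\epsilon/2\ge 6\cdot q\epsilon/8$ fails. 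I would therefore use $\Pr[Z\ge(1+\beta)\mu]\le e^{-\beta\mu/3}$ with $(1+\beta)\mu=q\epsilon/2$ and $\mu=q\epsilon/8$, so $\beta=3$. This gives exponent $q\epsilon/8\ge\ln 6$.
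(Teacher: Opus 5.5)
Your proposal is correct and follows the paper's proof: condition on Algorithm~1 returning a core with $\delta(I,J)\le\eta$ (failure probability at most $1/6$), invoke Corollary~\ref{cor:gap}, use an upper-tail Chernoff bound on the good candidate for completeness, and for soundness use a lower-tail Chernoff bound with deviation about $127/255$ plus a union bound over the $2^{4d^2}$ candidates. The only difference is cosmetic: for completeness the paper uses $\Pr[Z\ge a]\le (e\mu/a)^a$, which handles $\mu\le q\epsilon/8$ directly, whereas you use the $e^{-\beta\mu/3}$ form after dominating by a sum with mean exactly $q\epsilon/8$; also, your aside about the $2^{-t}$ bound should say that it does \emph{not} apply.
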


\begin{proof}
First assume that $\Rbin(M)\leq d$.  Then $\Rreal(M)\leq d$, so Algorithm~1 can
never construct a core of full real rank $d+1$.  By
Lemma~\ref{lem:core-stops}, with probability at least $5/6$ Algorithm~1 stops with
a core of size $r \leq d$ satisfying $\delta(I,J)\leq\eta$.
Condition on this event and on the complete history of Algorithm~1.  The core
and all candidates are now fixed, while the row and column samples in
Algorithm~2 remain independent.

Recall that $Z_{\mathcal C}$, defined in Step~5 of Algorithm~2, is the total number of sampled rows and columns that are bad for candidate $\mathcal C$.
By Corollary~\ref{cor:gap}, the candidate family is nonempty and some candidate $\mathcal C^*$ satisfies
$\rho_{\mathcal C^*}+\gamma_{\mathcal C^*}\leq\epsilon/8$.  Hence,
$Z_{\mathcal C^*}$ is a sum of $2q$ independent Bernoulli variables: the $q$ row indicators have mean $\rho_{\mathcal C^*}$ and the $q$ column indicators have mean $\gamma_{\mathcal C^*}$.  Therefore, $\mu = \mathbb E[Z_{\mathcal C^*}] = q(\rho_{\mathcal C^*}+\gamma_{\mathcal C^*})
\leq q\epsilon/8$.
Since $q \epsilon/2 \geq 4 \mu$, using the standard Chernoff bound $\Pr[Z\geq a]\leq(e\mu/a)^a$, we get:
$$
\Pr[Z_{\mathcal C^*}>q\epsilon/2] \leq\left(\frac{e\mu}{q\epsilon/2}\right)^{q\epsilon/2}
\leq(e/4)^{q\epsilon/2}.
$$
But $e/4<e^{-1/3}$ and therefore,
$(e/4)^{q\epsilon/2} <\left(e^{-1/3}\right)^{q\epsilon/2} =e^{-q\epsilon/6}$.
Moreover, by the definition of $q$, we have $q\epsilon /6 \geq (10 \ln 6)/6  >\ln 6$.
Thus,
$$
\Pr[Z_{\mathcal C^*}>q\epsilon/2] \leq\left(\frac{e\mu}{q\epsilon/2}\right)^{q\epsilon/2}
\leq(e/4)^{q\epsilon/2} < e^{-q\epsilon/6} < e^{-\ln 6}=1/6
$$
Hence, the total probability of rejecting a YES instance is at most $1/6 + 1/6 = 1/3$.

Now assume that $M$ is $\epsilon$-far from binary rank at most $d$.
If Algorithm~1 finds a core with real rank $d+1$, then Algorithm~2 rejects correctly.
Otherwise, except with probability at most $1/6$, Algorithm~1 stops with a core satisfying
$\delta(I,J)\leq\eta$.  Condition again on the complete history of Algorithm~1.
By Corollary~\ref{cor:gap}, every candidate $\mathcal C$ satisfies
$\rho_{\mathcal C}+\gamma_{\mathcal C} \geq255\epsilon/256$, and hence,
$\mu_{\mathcal C} = \mathbb E[Z_{\mathcal C}] \geq 255q\epsilon/256$.
Thus,
$q\epsilon/2\leq 128 \mu_{\mathcal C} /255 =\left(1-127/255\right)\mu_{\mathcal C}$.
We now apply the standard lower-tail Chernoff bound which states that if $Z$ is a sum of independent Bernoulli random variables with mean $\mu$, then for every $0\leq \delta\leq 1$,
$$
\Pr[Z\leq (1-\delta)\mu]\leq \exp\left(-\frac{\delta^2\mu}{2}\right).
$$
Setting $\delta = 127/255$ in this inequality gives
$$
\Pr[Z_{\mathcal C}\leq q\epsilon/2] \leq \Pr[Z_{\mathcal C}\leq  \left(1- 127/255\right)\mu_{\mathcal C} ] \leq
\exp \left(-\frac{1}{2}\left(\frac{127}{255}\right)^2\mu_{\mathcal C}\right).
$$
Moreover, since $\mu_{\mathcal C}\geq255q\epsilon/256$ then
$\frac{1}{2}\left(\frac{127}{255}\right)^2\mu_{\mathcal C}  > \frac{q\epsilon}{10}$.
Therefore, for each candidate ${\mathcal C}$ we have
$\Pr[Z_{\mathcal C}\leq q\epsilon/2]\leq  e^{-q\epsilon/10}$.
Let $N$ denote the number of candidates for the fixed core.  By
Lemma~\ref{lem-candidate-count}, $N\leq 2^{4d^2}$.  Therefore, by a union bound,
$$
\Pr[\text{some candidate }\mathcal C\text{ satisfies } Z_{\mathcal C}\leq q\epsilon/2]
\leq N e^{-q\epsilon/10}.
$$
By definition of $q$, $ q\epsilon/10 \geq 4d^2\ln 2+\ln 6$, and hence,
$N e^{-q\epsilon/10} \leq 2^{4d^2}e^{-4d^2\ln 2-\ln 6} = 1/6$.
Together with the possible error probability of $1/6$ in Algorithm~1,
the total error is at most $1/3$.
\end{proof}

We note that if $\Rbin(M) \leq d$ and Algorithm~1 finds a core of full real rank $\Rreal(M)$, then there exists a candidate with no bad rows or columns. In this case, Algorithm~2 accepts for every choice of sampled rows and columns in the second phase.

\section{Proof of Theorem~\ref{thm:intro-reconstruction}}
\label{appendix-theo3}

The core found by the testing algorithm can be used to find an approximate decomposition of $M$ as follows. Once the core $P = M[I,J]$ has been found,
querying every row of $M$ on the columns of the core  and every column of $M$ on the rows of the core, reveals all the information needed to decide which rows and columns are good for any given
candidate.  Thus, the construction given in Lemma~\ref{lem:candidate-representation} can be made explicit without querying the remaining entries of $M$.
The following theorem, combined with the query bound of Algorithm~1, proves the result stated in Theorem~\ref{thm:intro-reconstruction}.

\begin{theorem}
\label{thm:reconstruction-yes}
Let $d\geq1$ and $0<\epsilon\leq1$.
Assume that $\Rbin(M)\leq d$.
With probability at least $5/6$, the core returned by Algorithm~1 can be used to construct $0,1$ matrices $A', B'$ of size  $n\times d$ and $d\times m$, respectively, such that
$M' = A' \cdot B'$ is a $0,1$ matrix and
$$
\operatorname{dist}(M,M') \leq 2\sqrt{d\,\delta(I,J)}+\delta(I,J)
\leq \frac{33}{256}\epsilon.
$$
This requires at most $d(n+m)$ additional queries.
\end{theorem}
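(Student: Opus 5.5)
The plan is to combine Lemma~\ref{lem:core-stops}, Lemma~\ref{lem:completeness-candidate} and the explicit construction in the proof of Lemma~\ref{lem:candidate-representation}. First, since $\Rbin(M)\leq d$ implies $\Rreal(M)\leq d$, Algorithm~1 cannot reach a core of size $d+1$. By Lemma~\ref{lem:core-stops}, with probability at least $5/6$ it therefore stops with a core $P=M[I,J]$ of size $r\leq d$ satisfying $\delta(I,J)\leq\eta=\epsilon^2/(256d)$. We condition on this event for the rest of the argument; no further randomness is used afterwards, so the success probability is exactly this $5/6$.

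Next, we spend the additional queries. For every row $i\in[n]$ we query $M[i,J]$, and for every column $j\in[m]$ we query $M[I,j]$. This costs at most $r(n+m)\leq d(n+m)$ queries. With this information, for every candidate $\mathcal C=(X,Y,W_U,G)$ and every row $i$ we can decide whether $i$ is good: we enumerate $a\in\{0,1\}^d$ and check three conditions, namely $a\cdot Y=M[i,J]$, $a\cdot U\in W_U$, and that $\operatorname{supp}(a)$ is a clique of $G$. The analogous test decides whether each column is good. Since all of $M$'s rows and columns are now known on the core, we can compute the exact values $\rho_{\mathcal C}$ and $\gamma_{\mathcal C}$ for each of the at most $2^{4d^2}$ candidates, and we select a candidate $\mathcal C^*$ minimizing $\rho_{\mathcal C}+\gamma_{\mathcal C}$. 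By Lemma~\ref{lem:completeness-candidate}, this minimum is at most $2\sqrt{d\,\delta(I,J)}$.

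We then apply the construction in the proof of Lemma~\ref{lem:candidate-representation} to $\mathcal C^*$, noting that it uses only the computed information. The rows of $A'$ are set as follows: the chosen witness $a_i$ for good rows, $x_i$ for core rows $i\in I$, and zero otherwise. The columns of $B'$ are set symmetrically. That lemma shows $M'=A'\cdot B'$ is a $0,1$ matrix and $\operatorname{dist}(M,M')\leq\rho_{\mathcal C^*}+\gamma_{\mathcal C^*}+\delta(I,J)\leq 2\sqrt{d\,\delta(I,J)}+\delta(I,J)$. The construction of $A',B'$ never needs entries of $M$ outside the queried rows and columns; only the distance analysis refers to $\widehat M$.

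Finally, we check the numeric bound. From $\delta(I,J)\leq\epsilon^2/(256d)$ we get $2\sqrt{d\delta}\leq\epsilon/8=32\epsilon/256$, and $\delta\leq\epsilon^2/256\leq\epsilon/256$ using $d\geq1$ and $\epsilon\leq1$, which sums to $33\epsilon/256$. There is no genuine obstacle here. The only point needing care is that the existence argument (Lemma~\ref{lem:completeness-candidate}) refers to an unknown decomposition $A\cdot B$, whereas the algorithm must select a candidate from data. Exact computation of $\rho_{\mathcal C}$ and $\gamma_{\mathcal C}$ from the full core-restricted rows and columns resolves this, since the minimizing candidate is at least as good as the one whose existence is guaranteed.
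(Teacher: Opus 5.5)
Your proposal is correct and follows the paper's proof essentially step for step. Like the paper, you condition on the event of Lemma~\ref{lem:core-stops}, query $M[i,J]$ and $M[I,j]$ for all rows and columns, pick the candidate minimizing $\rho_{\mathcal C}+\gamma_{\mathcal C}$ (bounded by Lemma~\ref{lem:completeness-candidate}), reuse the explicit construction from Lemma~\ref{lem:candidate-representation}, and finish with the same numeric check; the only slip is wording, since the success probability is at least $5/6$, not exactly $5/6$.
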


\begin{proof}
Since $\Rbin(M)\leq d$, Algorithm~1 cannot reach a core of full real rank
$d+1$.  Thus, Lemma~\ref{lem:core-stops} implies that, with probability at
least $5/6$, Algorithm~1 stops with a core $P = M[I,J]$ of full real rank $r\leq d$ satisfying
$\delta(I,J)\leq \eta$, where $\eta = \epsilon^2 /(256 d)$ as set by Algorithm~1.
Conditioning on this event happening, Lemma~\ref{lem:completeness-candidate} guarantees a candidate $\mathcal C$ with $\rho_{\mathcal C}+\gamma_{\mathcal C} \leq 2\sqrt{d\,\delta(I,J)}$.

We now show how to construct the decomposition $M' = A' \cdot B'$ from the core found by Algorithm~1.
Query $M[i,J]$ for every row $i$ and $M[I,j]$ for every column $j$.
For a fixed candidate $\mathcal C=(X,Y,W_U,G)$, these queried entries determine if each row and column is good.  Indeed, for a row $i$ we enumerate the vectors $a\in\operatorname{RowCand}(i)$ and check if at least one of them satisfies
$a\cdot U\in W_U$ and its support $\operatorname{supp}(a)$ is a clique of $G$.  The analogous test for a column $j$ enumerates the vectors $b\in\operatorname{ColCand}(j)$ and checks if at least one satisfies $V\cdot b\in W_U^\perp$ and its support $\operatorname{supp}(b)$ is an independent set of $G$.

Thus, we can compute $\rho_{\mathcal C}+\gamma_{\mathcal C}$ for every candidate $\mathcal C$, and choose a candidate $\mathcal C^*$ minimizing this quantity.  By the discussion above,
$\rho_{\mathcal C^*}+\gamma_{\mathcal C^*} \leq 2\sqrt{d\,\delta(I,J)}$.
Now let
$$
R_{\mathrm{good}}=\{i\in[n]\mid i\text{ is good for }\mathcal C^*\},
\qquad
C_{\mathrm{good}}=\{j\in[m]\mid j\text{ is good for }\mathcal C^*\},
$$
and set
$$
\mathrm{Rows}_{\mathcal C^*}=R_{\mathrm{good}}\cup I,
\qquad
\mathrm{Cols}_{\mathcal C^*}=C_{\mathrm{good}}\cup J.
$$
For every $i\in R_{\mathrm{good}}\setminus I$, fix a vector $a_i\in\operatorname{RowCand}(i)$ such that $a_i\cdot U\in W_U$ and $\operatorname{supp}(a_i)$ is a clique of $G$.
For every $j\in C_{\mathrm{good}}\setminus J$, fix a vector
$b_j\in\operatorname{ColCand}(j)$ such that $V\cdot b_j\in W_U^\perp$ and $\operatorname{supp}(b_j)$ is an independent set of $G$.
For a row $i\in I$ of the core, let $x_i$ denote the corresponding row of $X$, and for a column $j\in J$ of the core, let $y_j$ denote the corresponding column of $Y$.

Now, define the $i$'th row of $A'$ and the $j$'th column of $B'$ by
$$
A'_i=
\begin{cases}
x_i, & i\in I,\\
a_i, & i\in R_{\mathrm{good}}\setminus I,\\
0, & i\notin \mathrm{Rows}_{\mathcal C^*},
\end{cases}
\;\;\;\;\;\;\;\;\;\;\;\;\;
B'_j=
\begin{cases}
y_j, & j\in J,\\
b_j, & j\in C_{\mathrm{good}}\setminus J,\\
0, & j\notin \mathrm{Cols}_{\mathcal C^*}.
\end{cases}
$$

We now show that $A'\cdot B'$ is a $0,1$ matrix.
Consider first a pair of indices $i,j$ where $i \in \mathrm{Rows}_{\mathcal C^*}$
and $j \in \mathrm{Cols}_{\mathcal C^*}$.
If $i\in R_{\mathrm{good}}\setminus I$ and $j\in C_{\mathrm{good}}\setminus J$, then
$a_i\cdot U\in W_U$ and $V\cdot b_j\in W_U^\perp$, so $(a_i\cdot U)(V\cdot b_j)=0$.
Hence, by Lemma~\ref{lem:central-identity} we have $a_i\cdot b_j = \widehat M_{i,j}$.
Moreover, $\operatorname{supp}(a_i)$ is a clique of $G$ and
$\operatorname{supp}(b_j)$ is an independent set of $G$, so their supports intersect in at most one coordinate and hence $a_i\cdot b_j\in\{0,1\}$.

Consider now a pair $i,j$ for which at least one index belongs to the core.
If $i\in I$ and $j\in C_{\mathrm{good}}\setminus J$, then
$X\cdot b_j = M[I,j]$ because $b_j\in\operatorname{ColCand}(j)$.
Thus, $x_i\cdot b_j=M_{i,j}$.  Similarly, if
$i\in R_{\mathrm{good}}\setminus I$ and $j\in J$, then
$a_i\cdot Y=M[i,J]$ because $a_i\in\operatorname{RowCand}(i)$, and hence
$a_i\cdot y_j=M_{i,j}$.
If $i\in I$ and $j\in J$, then $x_i\cdot y_j = P_{i,j} = M_{i,j}$.
Since the core prediction is exact whenever $i\in I$ or $j\in J$, all these products are also equal to $\widehat M_{i,j}$ and are in $\{0,1\}$.

Finally, if $i \not\in \mathrm{Rows}_{\mathcal C^*}$ or
$j \not\in \mathrm{Cols}_{\mathcal C^*}$, then the product of the corresponding vectors is zero.
Therefore, $M' = A'\cdot B'$ is a $0,1$ matrix with binary rank at most $d$.

By definition, $\rho_{\mathcal C^*} = 1-|R_{\mathrm{good}}|/n$ and
$\gamma_{\mathcal C^*}=1-|C_{\mathrm{good}}|/m$.  Since
$R_{\mathrm{good}}\subseteq\mathrm{Rows}_{\mathcal C^*}$ and
$C_{\mathrm{good}}\subseteq\mathrm{Cols}_{\mathcal C^*}$, then:
$$
1-\frac{|\mathrm{Rows}_{\mathcal C^*}|}{n}\leq\rho_{\mathcal C^*},
\qquad
1-\frac{|\mathrm{Cols}_{\mathcal C^*}|}{m}\leq\gamma_{\mathcal C^*}.
$$
Thus, $M$ and $M' = A' \cdot B'$ can differ only on a row outside
$\mathrm{Rows}_{\mathcal C^*}$, on a column outside
$\mathrm{Cols}_{\mathcal C^*}$, or at an entry where the core prediction is
wrong.  Therefore,
$$
\operatorname{dist}(M,A'B')
\leq\rho_{\mathcal C^*}+\gamma_{\mathcal C^*}+\delta(I,J)
\leq 2\sqrt{d\,\delta(I,J)}+\delta(I,J).
$$
Finally, $\delta(I,J) \leq \eta = \epsilon^2/(256d)$ implies that
$2\sqrt{d\,\delta(I,J)} \leq 2 \sqrt{d\eta} = \epsilon/8$.
Moreover, $\eta \leq \epsilon/256$ because $0 < \epsilon \leq 1$ and $d \geq 1$.
Hence,
$$
\operatorname{dist}(M,A'B') \leq 2\sqrt{d\,\delta(I,J)}+\delta(I,J)
\leq \frac{\epsilon}{8} + \frac{\epsilon}{256} = \frac{33\epsilon}{256}.
$$
As to the number of additional queries required:
querying all remaining entries in the columns and rows of the core requires at most $r(n+m)\leq d(n+m)$ additional queries besides those made by Algorithm~1.
To evaluate one row or column for one candidate, brute-force
enumeration of all binary extensions costs $2^d\operatorname{poly}(d)$ operations.
Together with the bound $2^{O(d^2)}$ on the number of candidates,
the construction uses $2^{O(d^2)}(n+m)\operatorname{poly}(d)$ arithmetic
operations in the unit cost real arithmetic model.
\end{proof}

\section*{Declaration of generative AI usage}

The author used ChatGPT 5.6 Sol as an interactive research tool. The results were achieved throughout a back and forth discussion in which ChatGPT was used to propose proof ideas and explore possible suggestions proposed by the author.
The author directed the mathematical development throughout this process, including selecting the questions and proof directions to pursue, proposed modifications, identified gaps or unclear arguments and decided on the final statements and structure of the results.
The author carefully wrote and verified each mathematical statement, definition and proof appearing in this final manuscript and takes full responsibility for the correctness and content of this work.

\bibliographystyle{plain}
\bibliography{PolyTesting}

@string{jacm =  {Journal of the ACM}}

@article{shitov2017nonnegative,
  title={The nonnegative rank of a matrix: Hard problems, easy solutions},
  author={Shitov, Yaroslav},
  journal={SIAM Review},
  volume={59},
  number={4},
  pages={794--800},
  year={2017},
  publisher={SIAM}
}

@article{vavasis2010complexity,
  title={On the complexity of nonnegative matrix factorization},
  author={Vavasis, Stephen A},
  journal={SIAM journal on optimization},
  volume={20},
  number={3},
  pages={1364--1377},
  year={2010},
  publisher={SIAM}
}

@article{parnas2026mathematical,
  title={Mathematical and computational perspectives on the Boolean and binary rank and their relation to the real rank},
  author={Parnas, Michal},
  journal={arXiv preprint arXiv:2601.13900},
  year={2026}
}

@inproceedings{chandran2017parameterized,
  title={On the parameterized complexity of biclique cover and partition},
  author={Chandran, Sunil and Issac, Davis and Karrenbauer, Andreas},
  booktitle={11th International Symposium on Parameterized and Exact Computation (IPEC 2016)},
  year={2017},
  organization={Schloss Dagstuhl-Leibniz-Zentrum fuer Informatik}
}

@inproceedings{bshouty2023property,
  title={On Property Testing of the Binary Rank},
  author={Bshouty, Nader H},
  booktitle={48th International Symposium on Mathematical Foundations of Computer Science (MFCS 2023)},
  year={2023},
  organization={Schloss Dagstuhl-Leibniz-Zentrum f{\"u}r Informatik}
}

@article{rubinfeld1996robust,
  title={Robust characterizations of polynomials with applications to program testing},
  author={Rubinfeld, Ronitt and Sudan, Madhu},
  journal={SIAM Journal on Computing},
  volume={25},
  number={2},
  pages={252--271},
  year={1996},
  publisher={SIAM}
}

@article{parnas2021property,
  title={Property testing of the Boolean and binary rank},
  author={Parnas, Michal and Ron, Dana and Shraibman, Adi},
  journal={Theory of Computing Systems},
  volume={65},
  number={8},
  pages={1193--1210},
  year={2021},
  publisher={Springer}
}

@article{jiang1993minimal,
  title={Minimal NFA problems are hard},
  author={Jiang, Tao and Ravikumar, Bala},
  journal={SIAM Journal on Computing},
  volume={22},
  number={6},
  pages={1117--1141},
  year={1993},
  publisher={SIAM}
}

@inproceedings{BLWZ,
  author    = {Maria{-}Florina Balcan and
               Yi Li and
               David P. Woodruff and
               Hongyang Zhang},
  title     = {Testing Matrix Rank, Optimally},
booktitle = {Proceedings of the 13th Annual {ACM-SIAM} Symposium on Discrete
               Algorithms (SODA)},
  pages     = {727--746},
  year      = {2019},
  url       = {https://doi.org/10.1137/1.9781611975482.46},
  doi       = {10.1137/1.9781611975482.46},
  bibsource = {dblp computer science bibliography, https://dblp.org}
  }

@article{Gregory,
  title={Biclique coverings of regular bigraphs and minimum semiring ranks of regular matrices},
  author={Gregory, David A. and Pullman, Norman J. and Jones, Kathryn F. and Lundgren, J. Richard},
  journal={Journal of Combinatorial Theory, Series B},
  volume={51},
  number={1},
  pages={73--89},
  year={1991},
  publisher={Elsevier}
}

@book{KN97,
author="E. Kushilevitz and N. Nisan",
publisher="Cambridge University Press",
title="Communication Complexity",
year=1997
}

@article{GGR98,
  title={Property testing and its connection to learning and approximation},
  author={Goldreich, Oded and Goldwasser, Shari and Ron, Dana},
  journal=jacm,
  volume={45},
  number={4},
  pages={653--750},
  year={1998},
  publisher={ACM}
}

@inproceedings{krauthgamer2003property,
  title={Property testing of data dimensionality},
  author={Krauthgamer, Robert and Sasson, Ori},
  booktitle={Proceedings of the 14th annual ACM-SIAM symposium on Discrete algorithms (SODA)},
  pages={18--27},
  year={2003},
  organization={Society for Industrial and Applied Mathematics}
}

@inproceedings{Li,
 author = {Li, Yi and Wang, Zhengyu and Woodruff, David P.},
 title = {Improved Testing of Low Rank Matrices},
 booktitle = {Proceedings of the 20th ACM SIGKDD International Conference on Knowledge Discovery and Data Mining (KDD)},
 year = {2014},
  pages = {691--700},
}

@article{orlin1977contentment,
  title={Contentment in graph theory: covering graphs with cliques},
  author={Orlin, James},
  journal={Indagationes Mathematicae},
  volume={80},
  number={5},
  pages={406--424},
  year={1977},
  organization={Elsevier}
}

\end{document}